\documentclass[12pt]{article}
\usepackage{subcaption}
\usepackage{hyperref}
\usepackage{xurl}
\usepackage{booktabs}
\usepackage{caption}
\usepackage{array}
\usepackage{geometry}
\usepackage{parskip}
\usepackage{hanging}   
\usepackage{authblk}
\usepackage{amsmath,amssymb,placeins}
\usepackage[normalem]{ulem}
\hypersetup{
	colorlinks=true,
	linkcolor=blue,
	urlcolor=blue,
	citecolor=blue
}

\usepackage{amsmath}
\usepackage{amssymb, xcolor, graphicx}
\usepackage{amsthm}
\usepackage{algorithm}
\usepackage{algorithmic}

\usepackage[round]{natbib} 

\newtheorem{theorem}{Theorem}

\newtheorem{assumption}{Assumption}

\title{Scalable Statistical Inference in Stochastic Gradient Descent}
\author[1,2]{Rahul Singh}
\author[3]{Abhinek Shukla}

\affil[1]{ Department of Mathematics, Indian Institute Of Technology Delhi, India}
\affil[2]{Indian Institute of Technology Delhi-Abu Dhabi, Abu Dhabi, UAE}

\affil[3]{ Centre for Biomedical Data Science, Duke-NUS Medical School, Singapore}

\date{Email: \texttt{sirahul@iitd.ac.in}    (RS); \texttt{abhushukla@gmail.com} (AS)}

\begin{document}

\maketitle

\begin{abstract}
Constructing confidence regions for stochastic gradient descent (SGD) ideally requires estimating the asymptotic covariance matrix, a severe computational bottleneck in high dimensions. Traditional cancellation-based batch means methods bypass this estimation but require inverting a sample batch covariance matrix. This introduces strict mathematical degeneracy when the parameter dimension exceeds the number of batches. To address this problem, we utilize equal batch size batch means method and propose a simultaneous, marginal-friendly framework. The proposed marginal statistics has a asymptotic Student's $t$-distribution, and eliminates the matrix inversion step, entirely circumventing high-dimensional degeneracy. To achieve valid simultaneous coverage, we present an algorithm utilizing wild bootstrap samples drawn from a statistic as a function of only the diagonals of the variance-covariance estimator, and to further incorporate the contribution of cross-dependencies, we introduce an efficient Quasi-Monte Carlo procedure utilizing a $t$-copula approximation. Additionally, we integrate a Lugsail variance estimator to aggressively correct finite-sample bias and under-coverage. The proposed methodology delivers interpretable, simultaneous hyper-rectangular confidence regions that are statistically robust, memory-efficient, and strictly scalable for high-dimensional inference. The theoretical results are supported by extensive numerical simulation analysis through various aspects of dimension, number of batches and error structure.

\bigskip
\noindent \textbf{Keywords:} Stochastic gradient descent; Batch means method; High-dimensional inference; Simultaneous confidence regions; Bias correction.
\end{abstract}

\section{Introduction}
Stochastic Gradient Descent (SGD) is a ubiquitous optimization tool for large-scale data, originating from the seminal work of \citet{robbins1951}. Modern data sets benefit from SGD's computational efficiency and online compatibility, leading to its growing popularity \citep[see, e.g., ][]{bottou2010,bottou,sgd2017}.
    
Consider the data arise from $\Pi$, a probability distribution on $\mathbb{R}^{r}$, denoted by $\zeta \sim \Pi$. In a model fitting paradigm, a function $f:\mathbb{R}^{d} \times \mathbb{R}^{r} \to \mathbb{R}$ typically measures empirical loss for estimating a parameter $\theta$, having observed the data $\zeta$. The expected loss is represented by $F(\theta)=\mathbb{E}_{\zeta\sim\Pi}\left[f(\theta,\zeta)\right]$. The  parameter of interest is $\theta^*\in\mathbb{R}^d$ where
\begin{align*}
\theta^*=\arg\min_{\theta\in\mathbb{R}^d}F(\theta).
\end{align*}
The goal is to estimate $\theta^*$ using data $\zeta_i \overset{\text{iid}}{\sim} \Pi$ for $i = 1, \dots, n$, where iid refers to independently and identically distributed. Computing $\theta^*$ naturally involves a gradient based technique. If $F(\theta)$ is unavailable, it can be approximated by replacing it with the empirical loss, $n^{-1} \sum_{i=1}^{n} f(\theta, \zeta_i)$.  When data are online or calculating the whole gradient vector is expensive, an unbiased estimate is used instead. This yields several stochastic gradient methods. Let $\nabla f(\theta,\zeta)$ be the gradient vector of $f(\theta,\zeta)$ with respect to $\theta$, $\eta_i > 0$ be a learning rate, and $\theta_0$ be the SGD process's starting point. The SGD $i^{\text{th}}$ iterate is:
\begin{align*}
\theta_i=\theta_{i-1}-\eta_i\nabla f(\theta_{i-1},\zeta_i)\,, \qquad \text{ for } i = 1,2,\ldots\,\,\,.
\end{align*} 
Although approximations are used in the optimization, SGD estimates of $\theta$ can have good statistical properties \citep{fabian,ruppert1988,polyak1992}, especially when $\eta_i$ is decreasing at a certain rate and the estimate used is averaged SGD (ASGD):
\[
\widehat \theta_n:=n^{-1}\sum_{i=1}^{n}\theta_i\,.
\] 
Point estimates of $\theta^*$ alone are insufficient. \cite{polyak1992} has been crucial in developing a statistical inference framework for $\widehat{\theta}_n$. Define $S := \mathbb{E}_\Pi\left([\nabla f(\theta^*,\zeta)][\nabla f(\theta^*,\zeta)]^\top\right)$ and $A := \nabla^2 F(\theta^*)$ to represent the Hessian of $F(\theta)$ evaluated at $\theta$. When derivative and expectation are interchangeable, $\mathbb{E}_{\Pi} \left[\nabla f(\theta^*,\zeta) \right] = \nabla F(\theta^*) = 0$. \cite{polyak1992} found that if $F$ is convex with a Lipschitz gradient and $\eta_i=\eta i^{-\alpha}$ with $\alpha\in(0.5,1)$, then $\widehat \theta_n$ provides a consistent estimate of $\theta^*$, and under some additional conditions,
\begin{align}
\label{eq:pj_normal}
\sqrt{n}(\widehat \theta_n-\theta^*)\xrightarrow{\texttt{d}} N(0,\Sigma)\text{ as }n\to\infty, \text{ where } \Sigma=A^{-1}SA^{-1}.
\end{align}

To conduct a comprehensive end-to-end study, a practitioner should not only estimate $\theta^*$ but also evaluate the estimator's quality by estimating $\Sigma$, utilizing estimators of $\Sigma$ for inference, and providing uncertainty estimates for predictions. Consequently, statistical inference regarding model parameters represents a progressive step toward the reliable execution of machine learning algorithms. Despite the substantial body of literature addressing the convergence properties of the ASGD estimator and its derivatives \citep{zhang2004,nemi2009,agarwal2012}, estimators for $\Sigma$ have only been recently introduced \citep{fang2018,chen2020,zhu2021,singh2025}.

The estimation of $\Sigma$ facilitates an understanding of the accuracy of the estimation of $\theta^*$, especially in lower-dimensional contexts. Furthermore, estimators of $\Sigma$ can thereafter be utilized to produce estimators of the variability of functions of $\widehat{\theta}_n$, facilitating uncertainty quantification for forecasts. We examine this second aspect with further scrutiny. 
Moreover, by employing the asymptotic normality in \eqref{eq:pj_normal} together with reliable estimators of $\Sigma$, one can perform conventional multivariate hypothesis testing and construct confidence regions.

To the best of our knowledge, all available estimators of $\Sigma$ assume that the number of batches and the batch size both diverge together; see \cite{chen2020,singh2025}. Furthermore, for simultaneous inference, a standard assumption is that the estimate of $\Sigma$ is positive definite, which naturally requires the number of batches to exceed the dimension $d$. However, in many situations, this may not be true, e.g., when the data size is small or the variables are dependent. Current estimators do not provide a solution for inference in this scenario.  {This article aims to address this problem}.

The aim of the paper is two fold:
\begin{enumerate}
    \item to study marginal-friendly adaptation of equal batch size based inference performed in \cite{zhu2021}, along with uplifting the restriction on number of batches to two batches;
    \item to integrate finite sample bias adjustment of \cite{vats2022} and significantly enhance the multivariate coverage.  
\end{enumerate}

\citet{zhu2021} proposed a cancellation-based batch means method in order to estimate $\Sigma$. However, their joint test statistic requires inverting the sample-batch covariance matrix, which requires the number of batches to exceed $d$. Consequently, for large $d$, $m$ must be impractically large, forcing overly small batch sizes that violate the asymptotic Gaussian assumptions of the batch means. To resolve this and enable valid high-dimensional inference, we extend the work of \citet{zhu2021} by developing a simultaneous marginal-friendly framework utilizing Equal Batch-Size (EBS) means. By shifting from a joint Mahalanobis inversion to simultaneous marginal projections, we eliminate the matrix inversion step entirely, thereby completely bypassing the $m > d$ degeneracy requirement.

The remainder of this paper is organized as follows. Section \ref{sec:theory} establishes the asymptotic marginal distribution of the proposed test statistic under standard regularity assumptions, with the integration of the Lugsail bias correction detailed in Section \ref{subsec:lugsail}. Section \ref{sec:algorithms} introduces the proposed Wild Bootstrap and $t$-copula methodologies used to construct the simultaneous confidence hyper-rectangles. Section \ref{sec:simulation} presents an extensive empirical simulation study evaluating the framework across varying dimensionalities, sample sizes, covariate dependency structures, and heavy-tailed error distributions. Finally, Section \ref{sec:conclusion} concludes the paper. Technical proofs of the main theoretical results are deferred to the Appendix.


\section{Theoretical Analysis}
\label{sec:theory}
We start with two widely accepted assumptions for SGD.
\begin{assumption}[On $F$]\label{ass1}
(i) $F({\theta})$ is differentiable and strongly convex with parameter $c > 0$; (ii) $\nabla F({\theta})$ is Lipschitz continuous with parameter $L > 0$; (iii) there exists a constant $C_2 > 0$ such that $\|\nabla F({\theta}) - \nabla^2 F({\theta}^*)({\theta} - {\theta}^*)\| \le C_2 \|{\theta} - {\theta}^*\|^2$; and (iv) $\nabla^2 F({\theta}^*)$ exists.
\end{assumption}

\begin{assumption} \label{ass2}
The sequence $\{\xi_t=\nabla F(\theta_{t-1}) - \nabla f(\theta_{t-1}, \zeta_t)\}_{t \ge 1}$ forms a martingale difference sequence with respect to the filtration $\mathcal{F} = \{\mathcal{F}_t\}_{t \ge 1}$ generated by $\{\zeta_t\}_{t \ge 1}$, and satisfies: (i) Around ${\theta}^*$, $\mathbb{E}[\xi_t \xi_t^\top \mid \mathcal{F}_{t-1}] = U + r(\Delta_{t-1})$ for some positive definite matrix $U$, and there exist constants $S_1, S_2 > 0$ such that for any $x \in \mathbb{R}^d$, $ \|r(x)\| \le S_1\|x\| + S_2\|x\|^2$; (ii) There exists a constant $M \in (0, \infty)$ such that $\|\xi_t\| \le M$ almost surely, for all $t \ge 1$.
\end{assumption}

Under Assumptions \ref{ass1} and \ref{ass2}, \citet{polyak1992} established that
\begin{equation}
\sqrt{n}(\widehat{\theta}_n - \theta^*) \Rightarrow \mathcal{N}(0, \Sigma) \text{ as } n \to \infty
\end{equation} 
where the asymptotic covariance matrix is $\Sigma = A^{-1} S A^{-1}$.

Let the SGD sequence of length $n$ be divided into $m$ equal non-overlapping batches of size $b = \lfloor n/m \rfloor$. Without any loss of generality, we assume that $n/m$ is an integer. Then the  empirical  covariance estimator is
\begin{equation*}
S_m(n) = \frac{1}{m-1} \sum_{i=1}^m (\widetilde\theta_i - \widehat{\theta}_n)(\widetilde\theta_i - \widehat{\theta}_T)^\top,\quad \text{where} \quad
\widetilde\theta_i = \frac{1}{b} \sum_{t = (i-1)b + 1}^{ib} \theta_t.
\end{equation*}
We denote the $j$-th dimension of the ASGD estimator by $\widehat{\theta}_{n, j}$, and its corresponding empirical variance from the diagonal of $S_m(n)$ by $[S_m(n)]_{jj}$. 
We define a statistic for dimension $j$ as
\begin{equation}\label{test:statistic}
t_{n,j} = \frac{\hat{\theta}_{n,j} - \theta^*_j}{\sqrt{[S_m(n)]_{jj} / m}} = \frac{\sqrt{m}(\hat{\theta}_{n,j} - \theta^*_j)}{\sqrt{[S_m(n)]_{jj}}}.
\end{equation}

\bigskip
\begin{theorem}\label{thm1}
Under Assumptions \ref{ass1} and \ref{ass2}, for any $m \ge 2$, for any $d$, $t_{n,j}$ converges in distribution to Student's t-distribution with $m-1$ degrees of freedom.
\end{theorem}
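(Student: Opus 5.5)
The plan is to show that the $m$ batch means, suitably centered and scaled, converge jointly to independent Gaussians, and then read off the Student $t$ law from the classical one-sample $t$-statistic. Since $\widehat\theta_n=m^{-1}\sum_{i=1}^m\widetilde\theta_i$, the statistic $t_{n,j}$ is exactly the one-sample $t$-statistic built from the $m$ scalars $\sqrt{b}(\widetilde\theta_{i,j}-\theta^*_j)$, $i=1,\dots,m$. We have
\[
\sqrt m(\widehat\theta_{n,j}-\theta^*_j)=\frac{1}{\sqrt m}\sum_{i=1}^m \sqrt b(\widetilde\theta_{i,j}-\theta^*_j),
\qquad
[S_m(n)]_{jj}\cdot b=\frac{1}{m-1}\sum_{i=1}^m\bigl(\sqrt b(\widetilde\theta_{i,j}-\widehat\theta_{n,j})\bigr)^2,
\]
and the factors of $\sqrt b$ cancel in the ratio. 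So $t_{n,j}=g(Z_{n,1},\dots,Z_{n,m})$ with $Z_{n,i}=\sqrt b(\widetilde\theta_{i,j}-\theta^*_j)$ and $g(z)=\sqrt m\,\bar z/s_z$. The map $g$ is continuous off the null set $\{z_1=\dots=z_m\}$. The theorem therefore reduces, via the continuous mapping theorem, to showing
\[
(Z_{n,1},\dots,Z_{n,m})\Rightarrow \mathcal N\bigl(0,\Sigma_{jj}I_m\bigr),
\]
with $m$ fixed and $b=n/m\to\infty$. For i.i.d. $\mathcal N(0,\sigma^2)$ inputs, $g$ has exactly the $t_{m-1}$ law. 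The fixed dimension $d$ plays no role beyond Assumption~\ref{ass2}, since only coordinate $j$ is used.

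For the joint CLT I would use the standard Polyak--Juditsky decomposition of the SGD error. Write $\Delta_t=\theta_t-\theta^*$, so that
\[
\Delta_t=\Delta_{t-1}-\eta_t A\Delta_{t-1}+\eta_t\xi_t-\eta_t\bigl(\nabla F(\theta_{t-1})-A\Delta_{t-1}\bigr).
\]
Sum over the $i$-th block $\{(i-1)b+1,\dots,ib\}$ and divide by $\sqrt b$. This gives
\[
\sqrt b\,(\widetilde\theta_i-\theta^*)=b^{-1/2}A^{-1}\sum_{t\in\text{block }i}\xi_t+R_{n,i},
\]
where $R_{n,i}$ collects three pieces:
\begin{itemize}
\item the boundary terms $\Delta_{(i-1)b}/(\eta\sqrt b)$ type;
\item the step-size variation terms $\sum(\eta_t^{-1}-\eta_{t-1}^{-1})\Delta_t$;
\item the curvature remainder, bounded by $C_2\|\Delta_t\|^2$ from Assumption~\ref{ass1}(iii).
\end{itemize}

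Each piece is shown to be $o_P(1)$ using the known moment bound $\mathbb E\|\Delta_t\|^2\lesssim \eta_t\asymp t^{-\alpha}$, which holds under Assumptions~\ref{ass1}--\ref{ass2}. This is the main obstacle. For the first block ($i=1$) the sum starts at $t=1$, which gives the classical Polyak--Juditsky argument. For blocks $i\ge2$ the starting point $\Delta_{(i-1)b}$ is of order $(ib)^{-\alpha/2}$, and one must check that $\eta_{(i-1)b}^{-1}\|\Delta_{(i-1)b}\|/\sqrt b\asymp b^{\alpha/2-1/2}\to0$. This holds since $\alpha<1$. I would also replace the exact $A^{-1}$ representation by the standard $\sum_t \eta_t\prod_k(I-\eta_kA)$ weight-matrix bound from \cite{polyak1992} to control the within-block approximation error.

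Finally, I would apply the martingale CLT (Cram\'er--Wold plus the Lindeberg condition, which holds because $\|\xi_t\|\le M$) to the stacked vector
\[
\Bigl(b^{-1/2}\sum_{t\in\text{block }i}e_j^\top A^{-1}\xi_t\Bigr)_{i=1}^m .
\]
The conditional covariance converges, using Assumption~\ref{ass2}(i) and $\Delta_t\to0$ in $L^1$, to $e_j^\top A^{-1}UA^{-1}e_j=\Sigma_{jj}$ on the diagonal. The cross-block conditional covariances are exactly zero because the blocks occupy disjoint time indices of a martingale difference sequence. This yields asymptotic independence, and the continuous mapping step then gives $t_{n,j}\Rightarrow t_{m-1}$.
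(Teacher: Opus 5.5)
Your proof is correct and ends where the paper's does, at the one-sample $t$-statistic of $m$ i.i.d.\ normals, but you obtain the key input differently.

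\textbf{The paper's route.} The paper proves no limit theorem itself. It imports the functional CLT (Theorem 2 of \citet{zhu2021}), which gives joint convergence of $\sqrt{n}(\widehat\theta_n-\theta^*)$ and $nS_m(n)$ to functionals of $GB(\cdot)$. It then projects onto $e_j$ to get a scalar Brownian motion $W_j$ and reads off $\sqrt m\,\Delta W_{j,i}$ as i.i.d.\ $\mathcal N(0,1)$.

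\textbf{Your route.} You prove only the finite-dimensional fact actually needed: the $m$ scaled batch means are asymptotically i.i.d.\ $\mathcal N(0,\Sigma_{jj})$. You get this from the summation-by-parts identity on each block and a Cram\'er--Wold martingale CLT, with disjointness of the blocks giving independence.

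\textbf{What each buys.} The paper's route is shorter. It gives the joint limit of all $d$ coordinates and of the whole matrix $nS_m(n)$ at once, which its later description of $\mathbf{t}^*$ in the $t$-copula construction uses. However, it inherits the cited result's unstated hypotheses, in particular $\eta_t=\eta t^{-\alpha}$ with $\alpha\in(1/2,1)$. It also passes over the ratio step, including its discontinuity set, as ``straightforward algebra''. Your route is self-contained and needs no tightness. It shows where $\alpha<1$ enters (later-block boundary terms, of order $b^{\alpha/2-1/2}$) and where $\alpha>1/2$ enters (curvature remainder, of order $b^{1/2-\alpha}$). It also treats the discontinuity set of the ratio map explicitly. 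Applying Cram\'er--Wold to all $d$ coordinates would recover the paper's joint statement as well.

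\textbf{Three small repairs.}
First, your first display should have $\sqrt{n}(\widehat\theta_{n,j}-\theta^*_j)$ on the left, since a factor $\sqrt b$ is missing. The identity $t_{n,j}=g(Z_{n,1},\dots,Z_{n,m})$ you conclude is nonetheless right.
Second, the appeal to the Polyak--Juditsky weight matrices is unnecessary. Dividing the recursion by $\eta_t$ and summing by parts is already exact. The remainder consists of the three pieces you list plus the trivial shift term $(\Delta_{(i-1)b}-\Delta_{ib})/\sqrt b$.
Third, state that $\Sigma_{jj}=e_j^\top A^{-1}UA^{-1}e_j>0$. This holds because $U$ and $A$ are positive definite, and $U=S$ since $\xi=-\nabla f(\theta^*,\zeta)$ at $\theta^*$. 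This positivity is what makes $\{z_1=\dots=z_m\}$ a null set for the limit.
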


The proof of Theorem \ref{thm1} is deferred to the Appendix \ref{appendix1}. Theorem \ref{thm1} represents a critical departure from the framework of \citet{zhu2021}. In \citet{zhu2021}, the test statistic relies on the quadratic form $(\hat{\theta}_n - \theta^*)^\top S_m(n)^{-1} (\hat{\theta}_n - \theta^*)$, which intrinsically requires the sample batch covariance matrix $S_m(n)$ to be strictly invertible. 
This imposes a hard constraint that the number of batches must strictly exceed the parameter dimension, $m > d$.
In highly parameterized machine learning settings, $m > d$ may not be feasible. In such a scenario, Theorem \ref{thm1} can be utilized. The formulation relies exclusively on the scalar diagonal elements $[S_m(n)]_{jj}$. This completely circumvents the $m > d$ degeneracy limit. Consequently, valid and interpretable marginal inference can be conducted using a small, stable number of batches (e.g., $m \ge 2$), entirely independent of the ambient dimension $d$.

\subsection{Integrating Lugsail Bias Correction}
\label{subsec:lugsail}
\citet{zhu2021} observe that when $n$ is small and $m$ is large, small batch sizes cause the batch means to deviate from their asymptotic Gaussian distribution, introducing pre-limit bias. For reliable finite-sample inference, bias correction is crucial. \citet{singh2025} demonstrated that Equal Batch-Sizes (EBS) are uniquely suited for Lugsail bias correction. 

The standard Lugsail estimator corrects first-order variance bias via the linear combination $\hat{\Sigma}_{Lug} = 2\hat{\Sigma}_{2b} - \hat{\Sigma}_b$, where $b = n/m$ is the batch size. In our cancellation framework, $S_m(n)$ estimates ${m}\Sigma/n$, so $\hat{\Sigma}_b = {n} S_m(n)/m$. Assuming $m$ is even, doubling the batch size to $2b$ halves the number of batches, yielding $\hat{\Sigma}_{2b} = {2n} S_{m/2}(n)/m$. Therefore,
\begin{equation*}
\hat{\Sigma}_{Lug} = 2 \left( \frac{2n}{m} S_{m/2}(n) \right) - \left( \frac{n}{m} S_m(n) \right) = \frac{n}{m} \big( 4 S_{m/2}(n) - S_m(n) \big)
\end{equation*}
Factoring out the shared ${n}/{m}$ scalar to retain the scale-free properties of the cancellation framework, the unscaled Lugsail batch covariance matrix is given by:
\begin{equation*}
S_{Lug}(n,m) = 4 S_{m/2}(n) - S_m(n).
\end{equation*} 
By replacing $[S_m(n)]_{jj}$ with $[S_{Lug}(n,m)]_{jj}$ in \eqref{test:statistic}, we significantly reduce the systematic variance bias common in finite-sample SGD. Moreover, this framework maintains strict memory efficiency and equal batch distributions; for more details, see \cite{singh2025}.

\section{Implementation Schemes}
\label{sec:algorithms}

In finite samples, the exact distribution of \eqref{test:statistic} may deviate significantly from its asymptotic limit, rendering standard inference unreliable. To address this, we introduce two finite-sample correction schemes. Both approaches seamlessly accommodate either the standard Equal Batch-Size (EBS) or the Lugsail bias-corrected covariance estimators. Hereafter, we let $\widehat{\Sigma}$ generically denote either estimator for the true asymptotic covariance $\Sigma$, explicitly defined as $\widehat{\Sigma} = \frac{n}{m} S_m(n)$ for EBS and $\widehat{\Sigma} = \frac{n}{m} S_{Lug}(n,m)$ for Lugsail.

\subsection{Wild--Bootstrap--based Implementation}
We approximate the sampling distribution using a Wild Bootstrap \citep{wildboot}. Unlike the naive bootstrap, this method preserves the sequential order of the batch means. It generates bootstrap samples by perturbing the centered batch means, their deviations from the grand mean, with independent, zero-mean random multipliers. Crucially, this non-parametric technique inherently captures the underlying dimensional dependency structure without requiring the estimation of an explicit correlation matrix. The complete Wild Bootstrap procedure is detailed in Algorithm \ref{alg:wild_bootstrap}.

\begin{algorithm}[h!]
\caption{Wild Bootstrap for Simultaneous Inference}
\label{alg:wild_bootstrap}
\begin{algorithmic}[1]
\REQUIRE Batch means $\{\widetilde{\theta}_i\}_{i=1}^m$, ASGD estimator $\widehat{\theta}_n$, covariance estimate $\widehat{\Sigma}$, Bootstrap iterations $B$, Target confidence level $1-\alpha$.
\FOR{$i = 1$ \TO $m$}
    \STATE $Y_i \leftarrow \widetilde{\theta}_i - \widehat\theta_n$ \quad \textit{(Compute the $d$-dimensional centered deviation for each batch)}
\ENDFOR
\FOR{$k = 1$ \TO $B$}
    \STATE Generate $m$ iid centered random variables $w_1^{(k)}, \dots, w_m^{(k)}$. We took two distributions: Rademacher $w_i \in \{-1, 1\}$ with equal probability, and Standard Normal.
    \STATE Construct the perturbed bootstrap error vector:
     \quad $E^{(k)} \leftarrow \frac{1}{m} \sum_{i=1}^m w_i^{(k)} Y_i$
    \STATE Compute the marginal Lugsail-scaled statistic for each dimension $j \in \{1, \dots, d\}$:
     \quad $t_j^{(k)} \leftarrow \frac{E_j^{(k)}}{\sqrt{[\widehat{\Sigma}]_{jj} / n}}$
    \STATE Extract the maximum absolute statistic across all dimensions to act as a union bound:
     \quad $M^{(k)} \leftarrow \max_{1 \le j \le d} |t_j^{(k)}|$
\ENDFOR
\STATE Set $z^*$ to be the empirical $(1-\alpha)$-quantile of the sorted list $\{M^{(1)}, \dots, M^{(B)}\}$.
\RETURN Marginal-friendly Simultaneous Confidence Region: 
\begin{align}\label{simult:region}
 C_n(z^*) = \prod_{j=1}^d \left[ \widehat\theta_{n, j} - z^* \sqrt{\frac{[\widehat{\Sigma}]_{jj}}{n}}, \quad \widehat\theta_{n, j} + z^* \sqrt{\frac{[\widehat{\Sigma}]_{jj}}{n}} \right] 
 \end{align}
\end{algorithmic}
\end{algorithm}

Algorithm \ref{alg:wild_bootstrap} avoids the intractable derivation of a closed-form generalized $\chi^2$ ratio. Instead, under the Gaussian limit theorem, the perturbed errors $E^{(k)}$ naturally approximate the true asymptotic distribution of the ASGD estimator. Because each $d$-dimensional batch deviation $Y_i$ is multiplied by a single scalar weight $w_i^{(k)}$, the procedure perfectly preserves the empirical cross-covariance structure across all $d$ parameters. Furthermore, evaluating the maximum absolute statistic, $\max_{j} |t_j^{(k)}|$, within each bootstrap iteration inherently adjusts for simultaneous inference. 

\subsection{The $t$-Copula Approximation}
Let $\mathbf{t}_n = (t_{n, 1}, \dots, t_{n, d})^\top$ and $R \in \mathbb{R}^{d \times d}$ denotes the asymptotic correlation matrix of the ASGD estimators, i.e., $R_{jk} = {\Sigma_{jk}}/{\sqrt{\Sigma_{jj}\Sigma_{kk}}}$. From Theorem \ref{thm1}, we have that $\mathbf{t}_n \Rightarrow \mathbf{t}^*$ where each marginal limit follows a Student's t-distribution, $t^*_j \sim t_{m-1}$. Note that the components of $\mathbf{t}^*$ can be expressed as $t^*_j = {\sqrt{m} \bar{Z}_j}/{\sqrt{[S_Z]_{jj}}}$
where $Z_1, \dots, Z_m \sim \mathcal{N}(0, R)$ are independent multivariate normal vectors, $\bar{Z}$ is their sample mean, and $S_Z$ is their sample covariance matrix. While deriving a joint distribution for $\mathbf{t}^*$ is analytically difficult, its dependency structure is governed by the true correlation matrix $R$. We will use this insight to approximate a simultaneous region of the form \eqref{simult:region}.

To compute the simultaneous critical value $z^*$, we approximate this dependency structure using a Student's $t$-copula. By modeling the joint distribution as a standard multivariate Student's $t$-distribution, $\mathcal{T}_{m-1}(0, \hat{R}_n)$, we achieve two critical statistical properties: (i) the exact marginal distribution is preserved; (ii) it inherently captures the tail dependencies and empirical cross-correlations embedded in $\hat{R}_n$. This provides a conservative approximation of the joint confidence bounds without requiring the complex simulation of the full sequence of empirical covariance matrices. The complete steps are provided in Algorithm \ref{alg:t-copula}.

\begin{algorithm}[h!]
\caption{Simultaneous Region via $t$-Copula}
\begin{algorithmic}[1] \label{alg:t-copula}
\REQUIRE 
Batch means $\{\widetilde{\theta}_i\}_{i=1}^m$, ASGD estimator $\widehat{\theta}_n$, covariance estimate $\widehat{\Sigma}$, Target confidence level $1-\alpha$.
\STATE Extract empirical correlation matrix $\hat{R}_n$ from covariance estimator $\widehat{\Sigma}$

\STATE Generate $N$ independent vector samples: 
 \quad $\mathbf{t}^{(k)} \sim \mathcal{T}_{m-1}(0, \hat{R}_n)$
\FOR{$k = 1$ \TO $N$}
    \STATE Extract the maximum absolute statistic across all dimensions:
    \STATE \quad $M^{(k)} \leftarrow \max_{1 \le j \le d} |t^{(k)}_j|$
\ENDFOR

\STATE Set $z^*$ as the empirical $(1-\alpha)$-quantile of the simulated maximums $\{M^{(1)}, \dots, M^{(N)}\}$
\RETURN Simultaneous Confidence Region: 
\[ C_n(z^*) = \prod_{j=1}^d \left[ \hat{\theta}_{n, j} - z^* \sqrt{\frac{[\widehat{\Sigma}]_{jj}}{n}}, \quad \hat{\theta}_{n, j} + z^* \sqrt{\frac{[\widehat{\Sigma}]_{jj}}{n}} \right] \]
\end{algorithmic}
\end{algorithm}

In Algorithm \ref{alg:t-copula}, ``mvtnorm'' R-statistical package can be utilized to sample from the multivariate $t$-distribution, allowing us to compute the joint statistics in a single computationally efficient step. Using this $t$-copula approximation, we implement a efficient Quasi-Monte Carlo (QMC) procedure \citep{robertson2021} to derive the simultaneous hyper-rectangular region.

\section{Simulation Study} \label{sec:simulation}

We evaluate our proposed marginal framework using a linear regression model, $y_i = x_i^\top \theta^* + \epsilon_i$, with $x_i \overset{\text{iid}}{\sim} \mathcal{N}(0, A)$ and independent errors $\epsilon_i \overset{\text{iid}}{\sim} \mathcal{N}(0,1)$. The true parameter vector $\theta^* \in \mathbb{R}^d$ is chosen as an equidistant grid on $(0,1)$, and the objective is the standard squared loss and standard absolute deviation loss. Following \citet{singh2025}, we evaluate three structures for $A$: identity ($A = I_d$), Toeplitz ($A_{i,j} = \rho^{|i-j|}$), and equicorrelation ($A_{i,j} = \rho$ for $i \ne j$ and $1$ otherwise).

Simulations are conducted across dimensionalities $d \in \{20, 30, 50\}$ and sample sizes ranges from $n = 10^4$ to $ n = 5 \times 10^5$, utilizing an SGD learning rate of $\gamma_t = 0.5 t^{-0.51}$. To approximate simultaneous confidence regions, we use $5\times10^4$ iterations for both the $t$-copula and Wild Bootstrap methods. We test batch allocations of $m \in \{10, 100\}$ to analyze Lugsail variance stability. Crucially, when $m \le d$ (e.g., $m=10$), the batch covariance matrix $S_m(n)$ is strictly singular. While this degeneracy structurally invalidates the joint $F$-statistic of \citet{zhu2021}, our marginal procedure completely bypasses the matrix inversion, enabling valid simultaneous inference in these constrained regimes.

The simultaneous confidence regions are computed using the Wild Bootstrap (Algorithm \ref{alg:wild_bootstrap}) and the $t$-copula approximation (Algorithm \ref{alg:t-copula}). All results are based on 5000 replicates. In the plots, we denote the $t$-copula combined with the standard Equal Batch-Size estimator as T EBS, and with the Lugsail bias correction as T LUG. Similarly, the corresponding Wild Bootstrap implementations are denoted as Wild EBS and Wild LUG.

The simulation sections are divided into five parts with the aim of each part is discussed as below:
\begin{enumerate}
    \item Section 4.1 studies the multivariate coverage and interval length behavior of the proposed marginal-friendly hyper-rectangles, implemented via Algorithms \ref{alg:t-copula} and \ref{alg:wild_bootstrap}, for $A$ equal to the identity matrix and $\epsilon_i$ standard normal.
    \item Section 4.2 examines Algorithm \ref{alg:wild_bootstrap} in more depth, comparing discrete (Rademacher) and continuous (normal) weights for generating bootstrap samples of the statistic under consideration, again with $A$ equal to the identity matrix and $\epsilon_i$ normally distributed.
    \item To understand the optimal choice of the number of batches, Section 4.3 presents extensive numerical simulations highlighting the impact of this choice on multivariate coverage and on the (logarithmic) volume of the resulting hyper-rectangle.
    \item Since the marginal distribution of the statistics considered in Theorem \ref{thm1} is available, Section 4.4 compares the proposed marginal-friendly hyper-rectangles with those obtained from a Bonferroni correction based on this marginal distribution. As the Bonferroni-corrected rectangles performed satisfactorily for independent regressors, we also extend the simulations to the equicorrelated and Toeplitz regressor structures.
    \item Section 4.5 further demonstrates the scope of our simulations by incorporating heavy-tailed error structures for $\epsilon_i$ in the linear model, considering two cases: (i) $t$-distributed errors with three degrees of freedom, and (ii) least absolute deviation (LAD) errors for the same linear model.
\end{enumerate}

\subsection{Results under number of batches equal to half of dimension}
Figures \ref{fig:mult_covg}, \ref{fig:se_intv}, \ref{fig:max_intv}, and \ref{fig:vol_rect} present the empirical coverage rates, standard errors of the interval lengths, maximum interval lengths, and log-volumes of the resulting hyperrectangular regions for $ d=20$ and $50$. For the Wild Bootstrap procedure, the plotted results utilize standard normal random multipliers; analogous implementations using Rademacher  multipliers yielded slight  under-performance and are omitted for brevity.

\begin{figure}[htbp!]
	\centering
	
	\begin{subfigure}[htbp!]{0.45\linewidth}
		\centering
		\includegraphics[width=\linewidth]{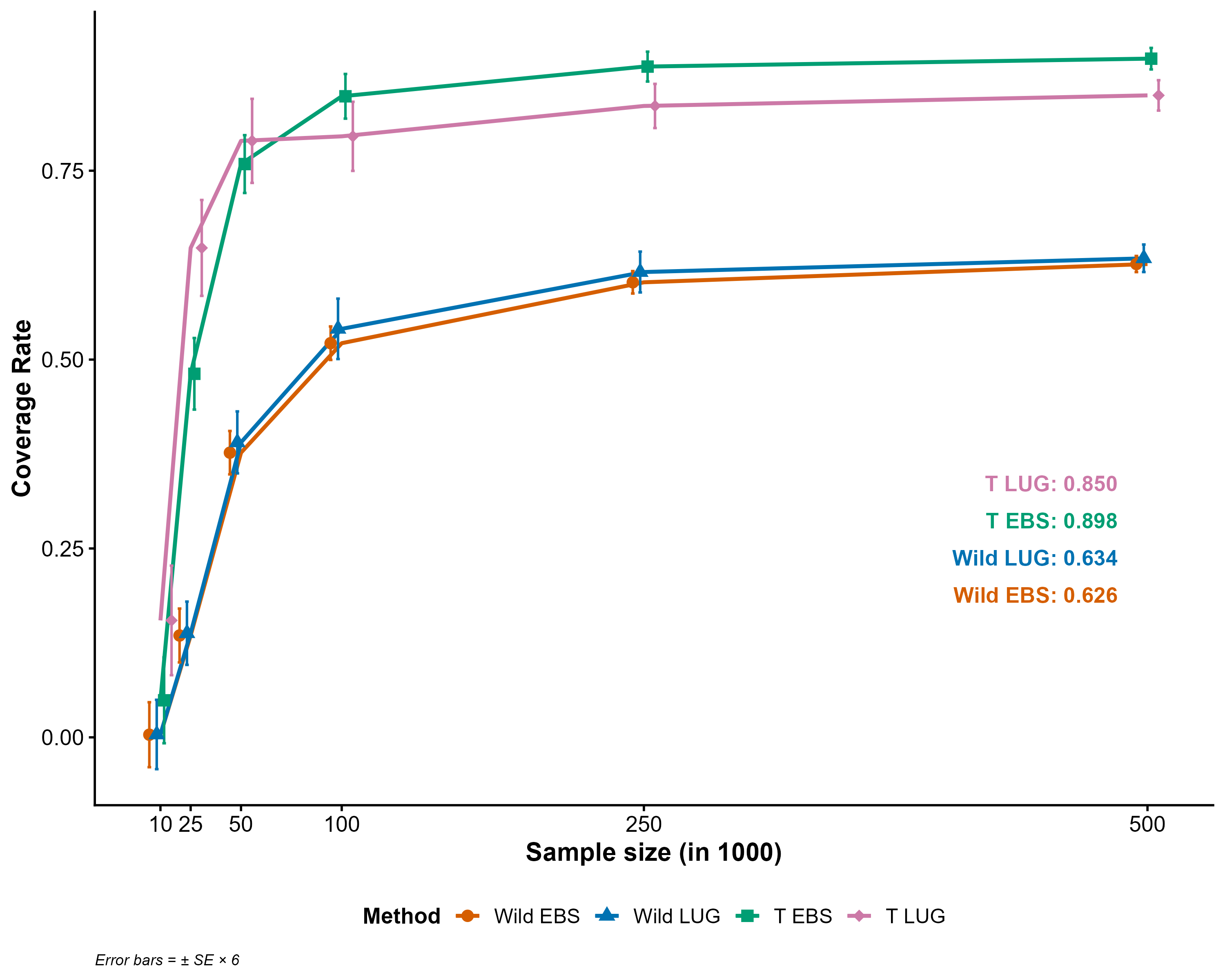}
		\caption{Dimension 20, number of batches 10}
		\label{fig:lin_mult_covg_dim20}
	\end{subfigure}
	\hfill
	\begin{subfigure}[htbp!]{0.45\linewidth}
		\centering
		\includegraphics[width=\linewidth]{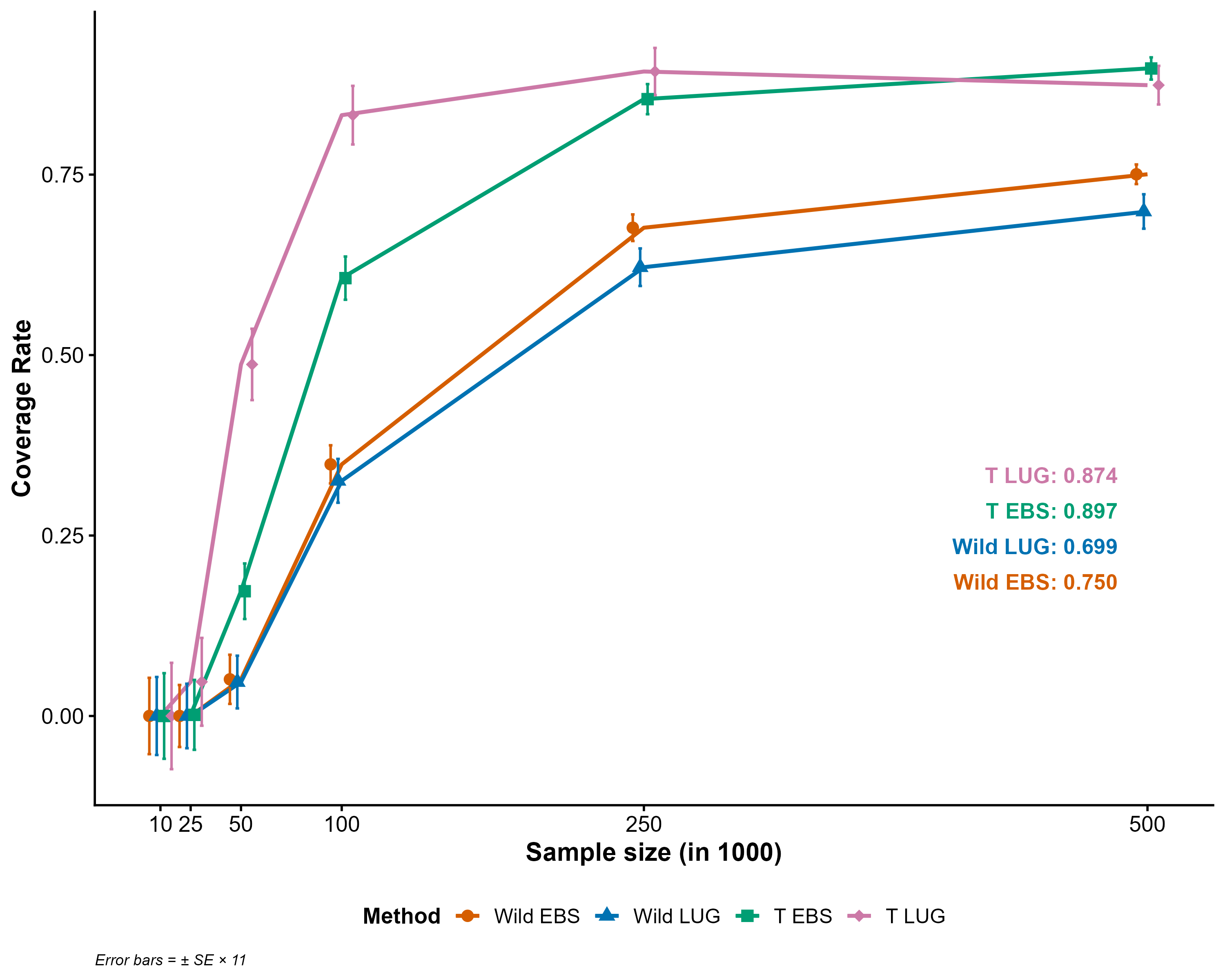}
		\caption{Dimension 50, number of batches 25}
		\label{fig:lin_mult_covg_dim50}
	\end{subfigure}
	
	\caption{{Multivariate coverage based on simultaneous marginal friendly implementation Here Algorithm \ref{alg:wild_bootstrap} is implemented with normal weights.}}
	\label{fig:mult_covg}
\end{figure}

\begin{figure}[htbp!]
	\centering
	
	\begin{subfigure}[htbp!]{0.45\linewidth}
		\centering
		\includegraphics[width=\linewidth]{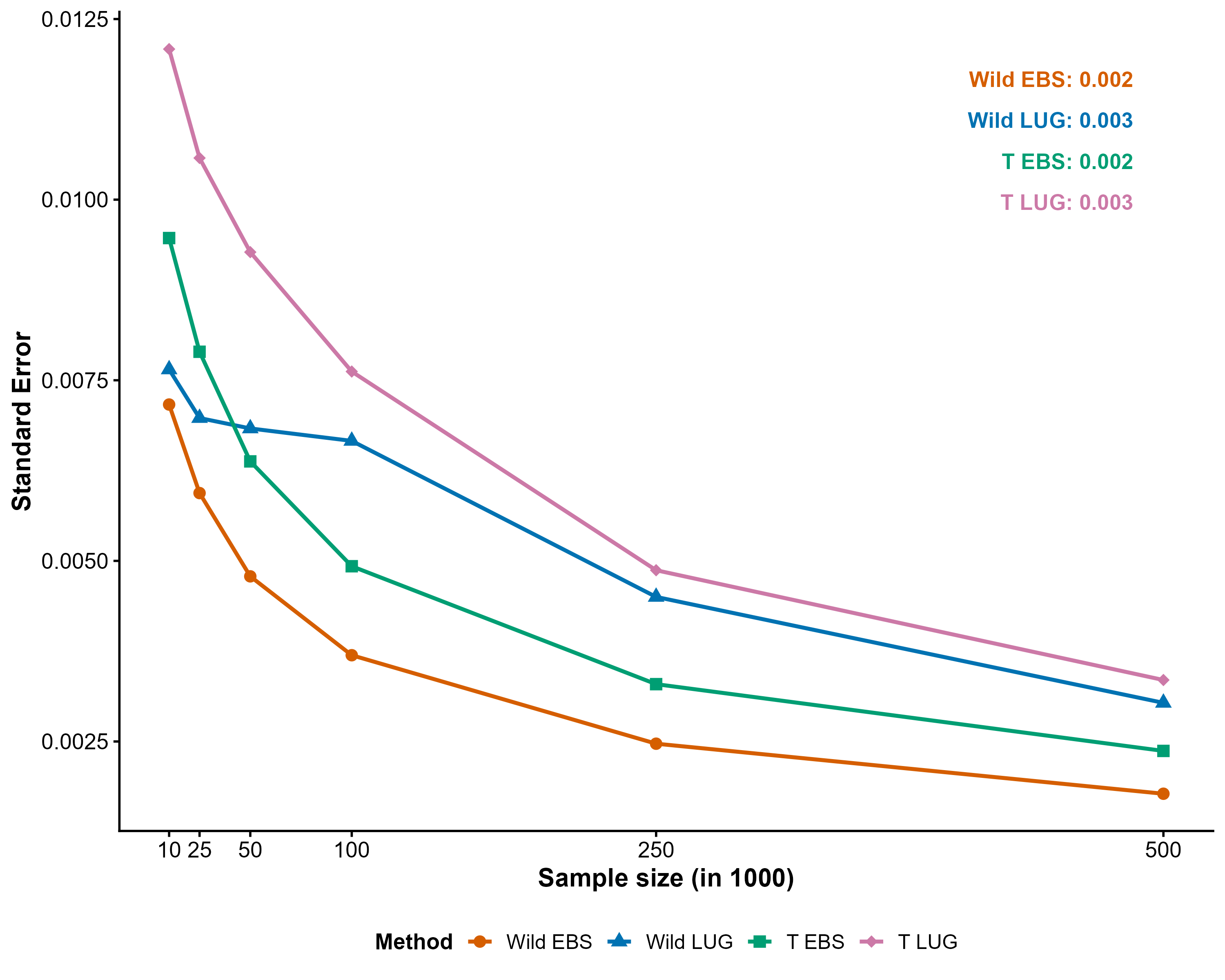}
		\caption{Dimension 20, number of batches 10}
		\label{fig:lin_se_dim20}
	\end{subfigure}
	\hfill
	\begin{subfigure}[htbp!]{0.45\linewidth}
		\centering
		\includegraphics[width=\linewidth]{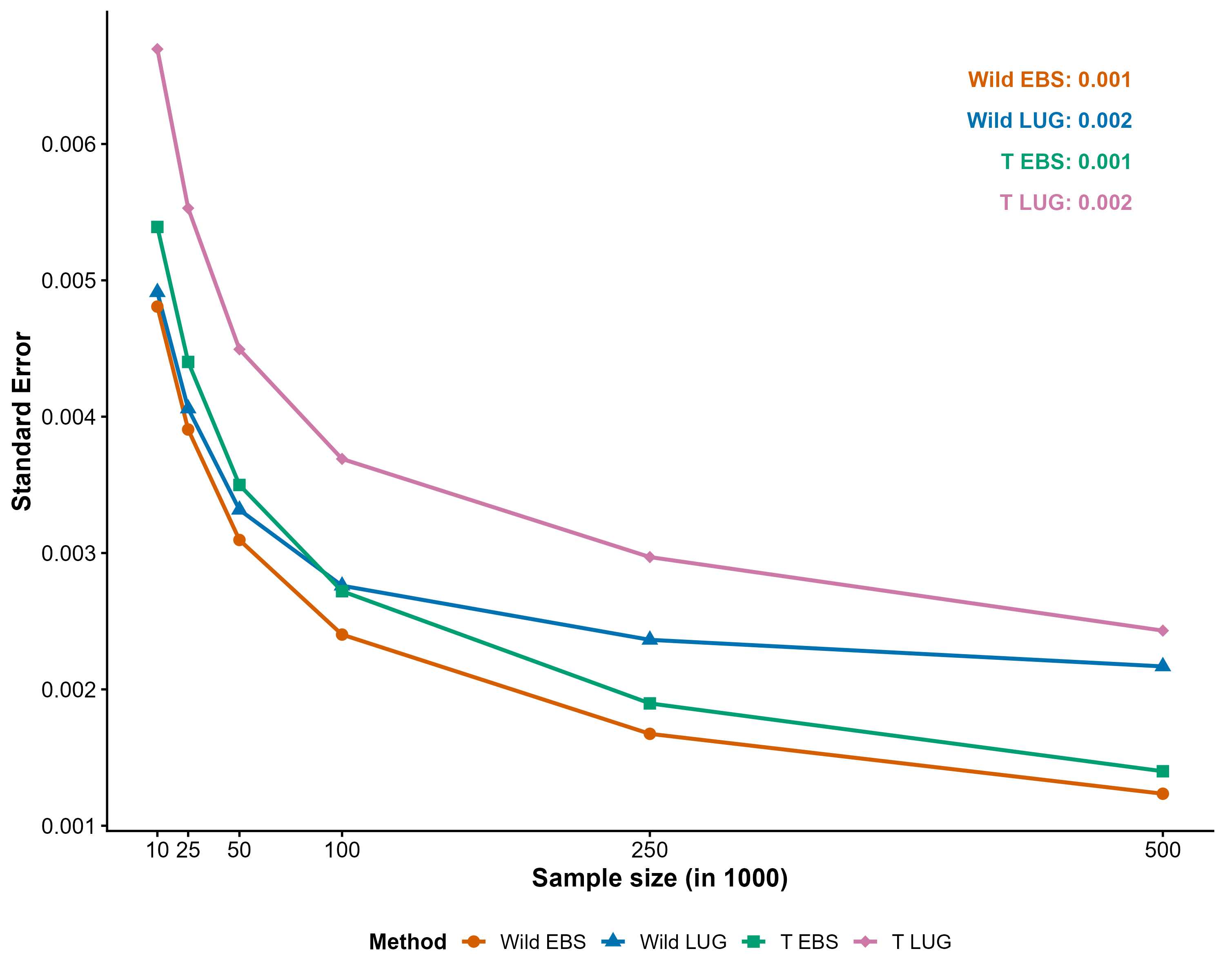}
		\caption{Dimension 50, number of batches 25}
		\label{fig:lin_se_dim50}
	\end{subfigure}
	\caption{{Standard error of interval length across different dimensions,  Here Algorithm \ref{alg:wild_bootstrap} is implemented with normal weights.}}
	\label{fig:se_intv}
\end{figure}

\begin{figure}[htbp!]
	\centering
	
	\begin{subfigure}[htbp!]{0.45\linewidth}
		\centering
		\includegraphics[width=\linewidth]{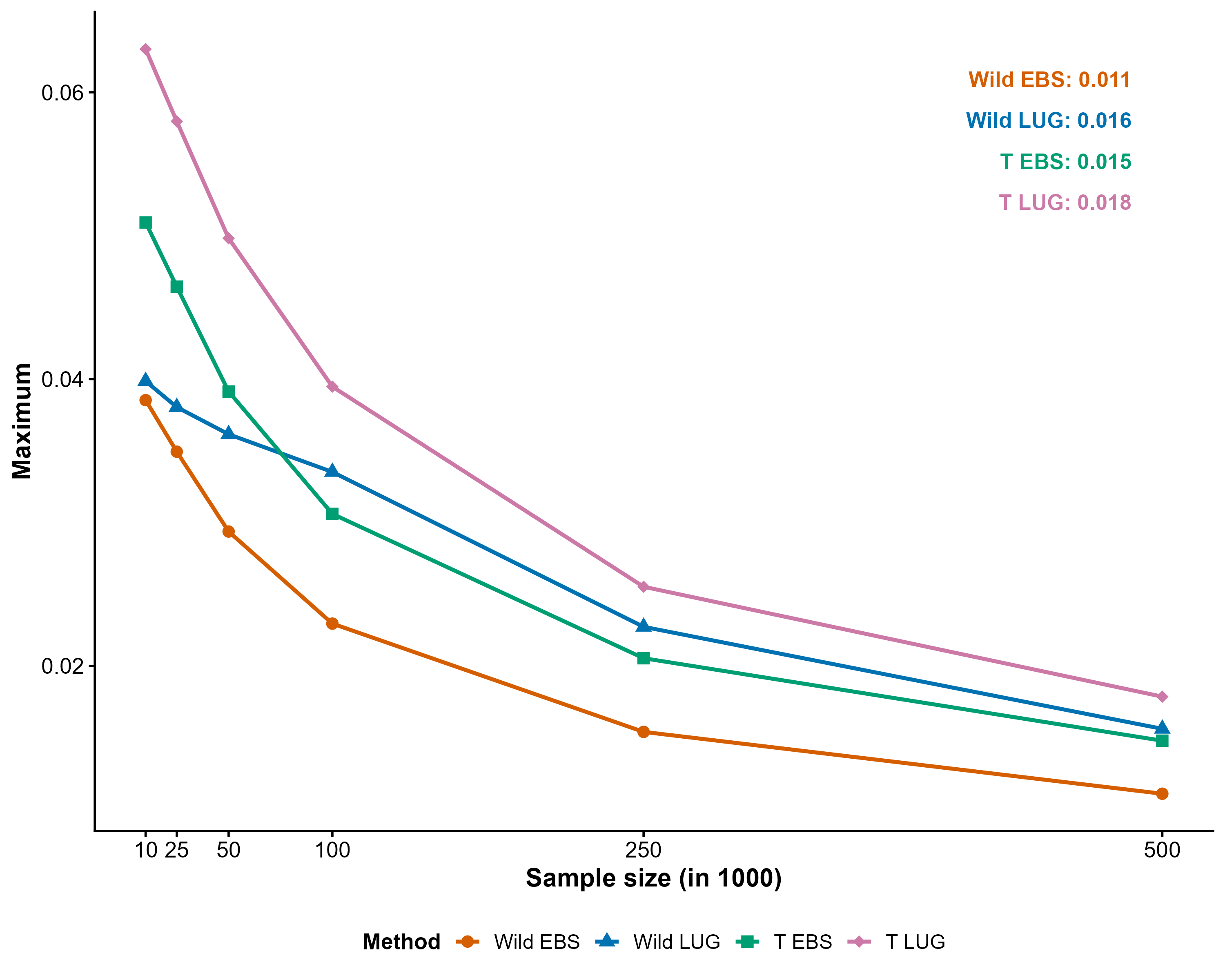}
		\caption{Dimension 20, number of batches 10}
		\label{fig:lin_max_dim20}
	\end{subfigure}
	\hfill
	\begin{subfigure}[htbp!]{0.45\linewidth}
		\centering
		\includegraphics[width=\linewidth]{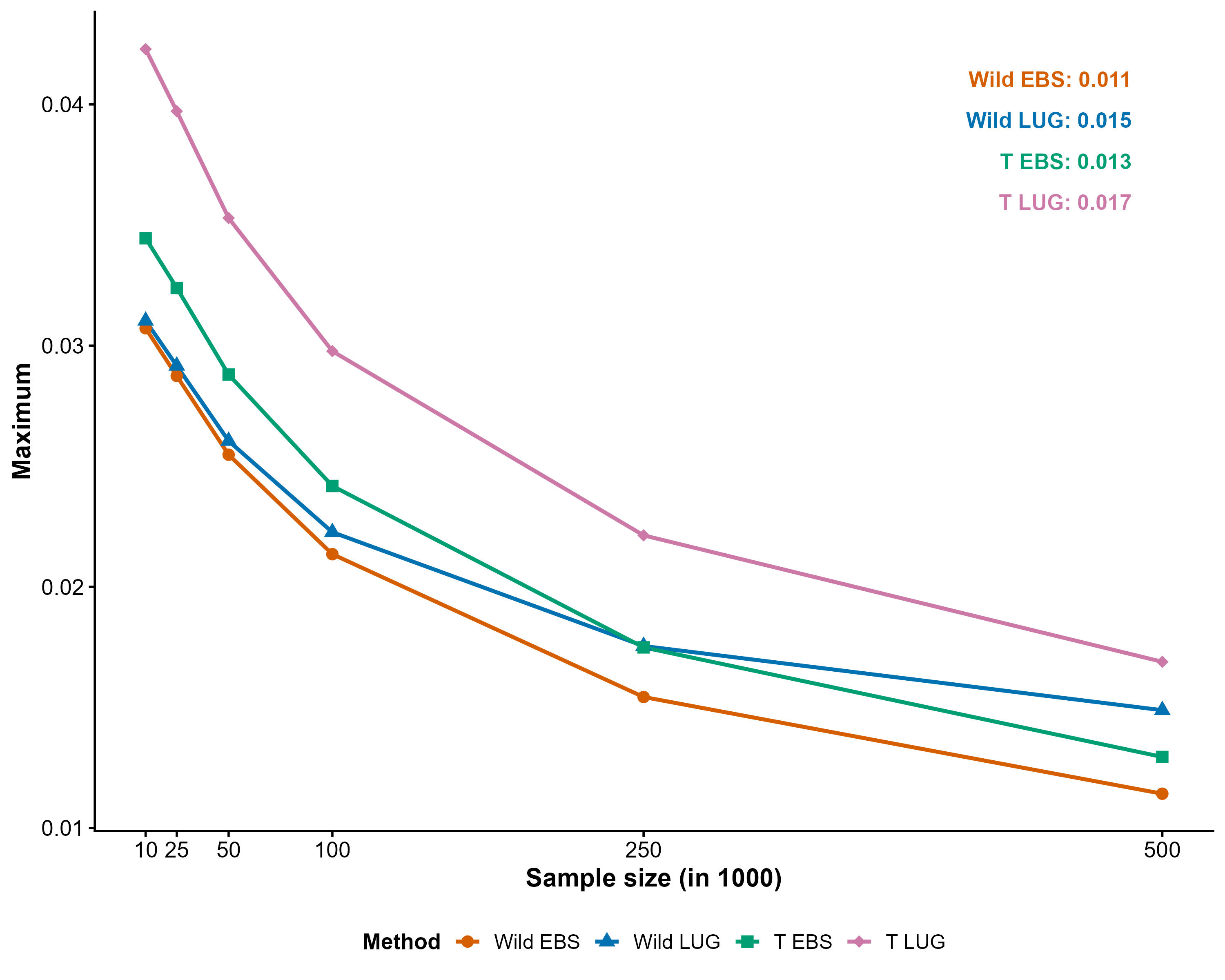}
		\caption{Dimension 50, number of batches 25}
		\label{fig:lin_max_dim50}
	\end{subfigure}
	
	\caption{{Maximum of interval length across different dimensions, Here Algorithm \ref{alg:wild_bootstrap} is implemented with normal weights.}}
	\label{fig:max_intv}
\end{figure}

\begin{figure}[htbp!]
	\centering
	
	\begin{subfigure}[htbp!]{0.45\linewidth}
		\centering
		\includegraphics[width=\linewidth]{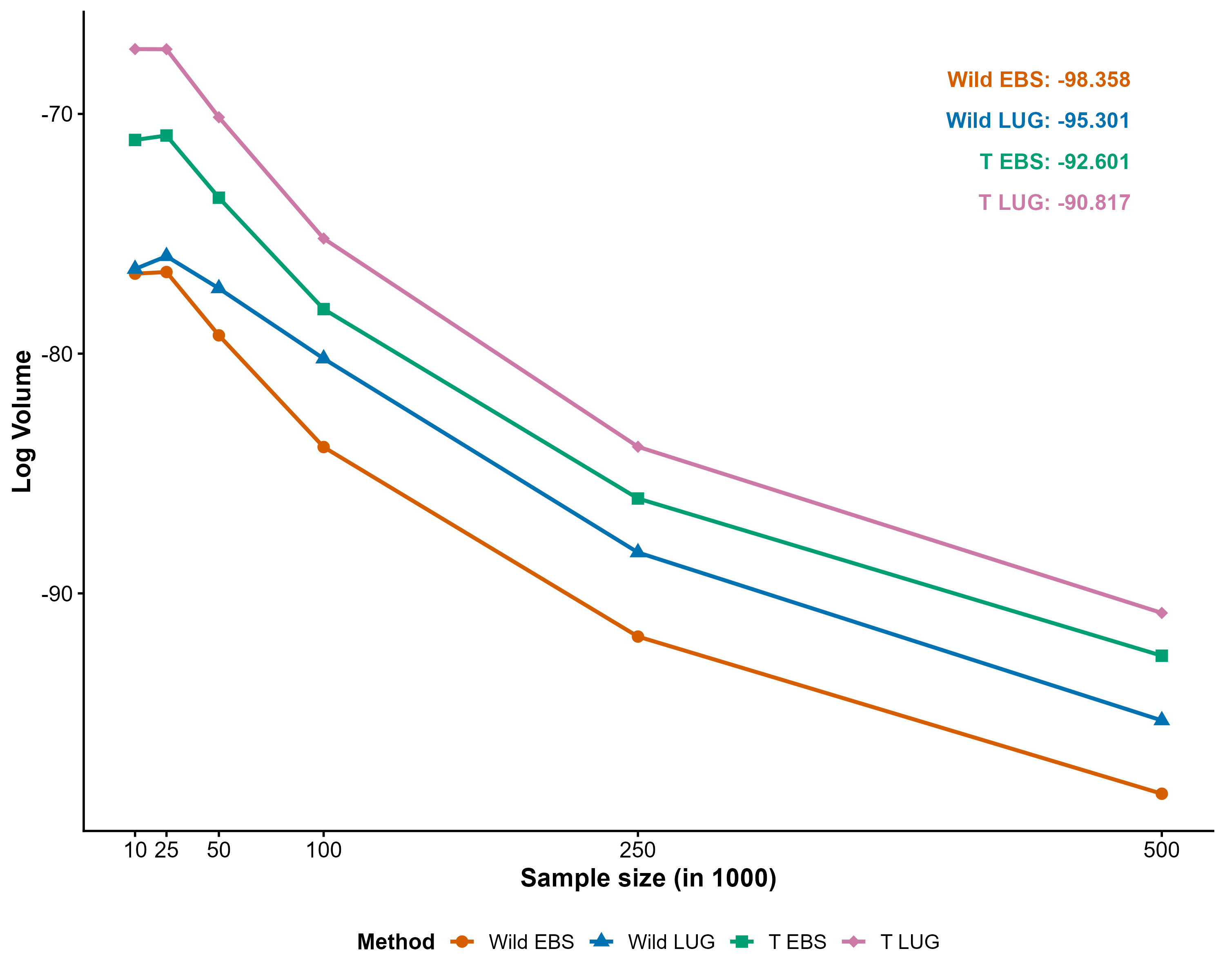}
		\caption{Dimension 20, number of batches 10}
		\label{fig:lin_vol_dim20}
	\end{subfigure}
	\hfill
	\begin{subfigure}[htbp!]{0.45\linewidth}
		\centering
		\includegraphics[width=\linewidth]{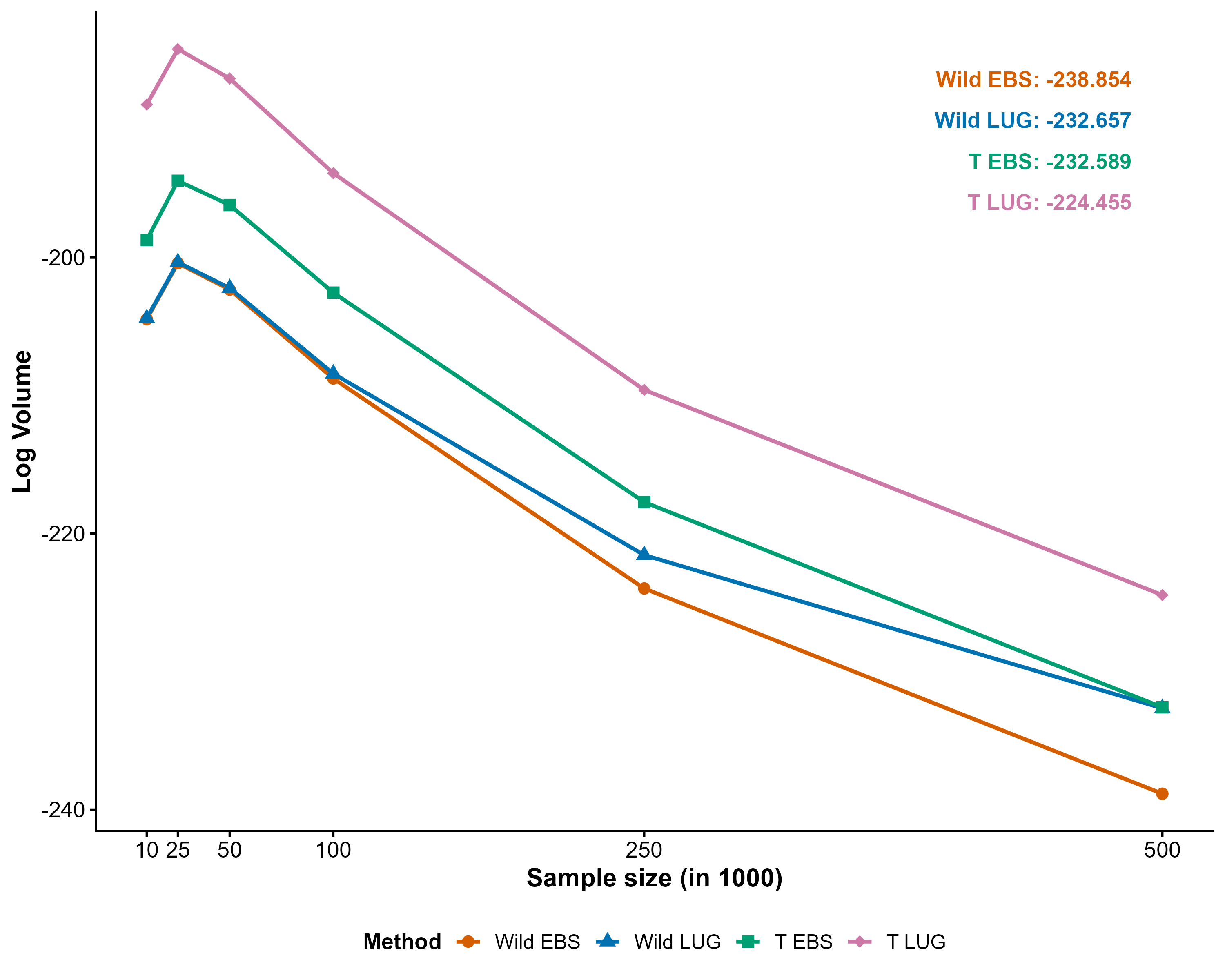}
		\caption{Dimension 50, number of batches 25}
		\label{fig:lin_vol_dim50}
	\end{subfigure}
	
	\caption{{Volume (logarithm) of hyper-rectangle, Here Algorithm \ref{alg:wild_bootstrap} is implemented with normal weights.}}
	\label{fig:vol_rect}
\end{figure}

Both the Wild Bootstrap and the $t$-copula successfully provide valid simultaneous inference even when the number of batches is half of the dimension. As expected, all evaluated interval metrics (standard error,  maximum length, and log-volume) exhibit monotonic decay as the sample size $n$ increases for a fixed $m$ and $d$. Moreover, the $t$-copula consistently outperforms the Wild Bootstrap across all scenarios by explicitly leveraging the full empirical variance-covariance structure. 
\subsection{Rademacher versus normal weights in wild--bootstrap }
When $d>m$, the choice of multiplier distribution in the Wild Bootstrap significantly impacts finite-sample behavior. As illustrated in Figure \ref{fig:rade_norm_comparison}, where $d=20$ and $m=10$, continuous $\mathcal{N}(0,1)$ weights substantially outperform discrete Rademacher weights by a significant margin, achieving strictly higher empirical coverage rates alongside correspondingly larger confidence hyper-rectangle volumes. Furthermore, employing continuous normal multipliers bridges the gap between the two proposed inference strategies, bringing the empirical performance of the Wild Bootstrap much closer to the exact joint covariance modeling achieved by the $t$-copula approximation (Algorithm \ref{alg:t-copula}).

\begin{figure}[htbp!]
    \centering

    \begin{subfigure}[htbp!]{0.48\linewidth}
        \centering
        \includegraphics[width=\linewidth]{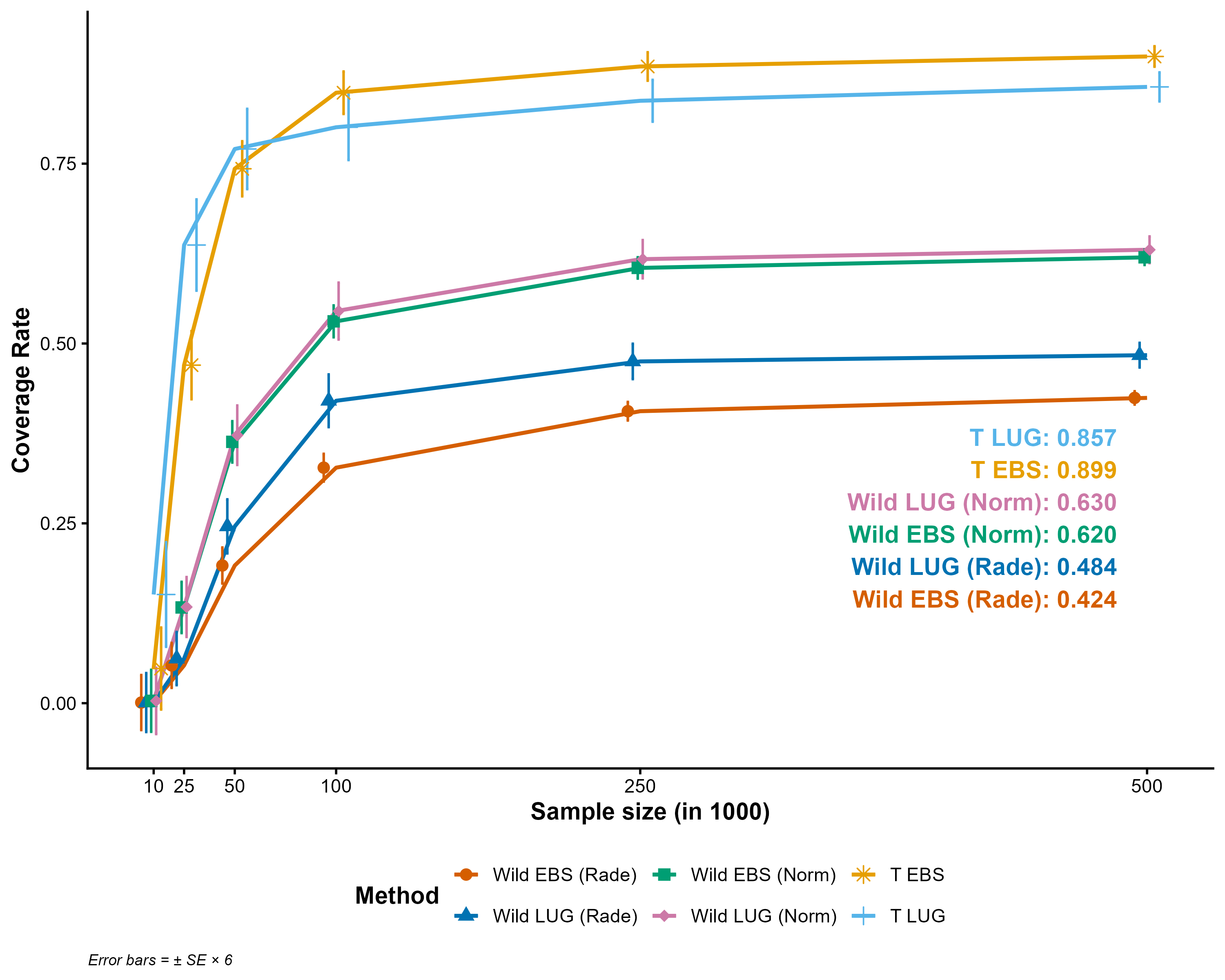}
        \caption{Multivariate coverage}
        \label{fig:sub_covg_rade_norm}
    \end{subfigure}
    \hfill
    \begin{subfigure}[htbp!]{0.48\linewidth}
        \centering
        \includegraphics[width=\linewidth]{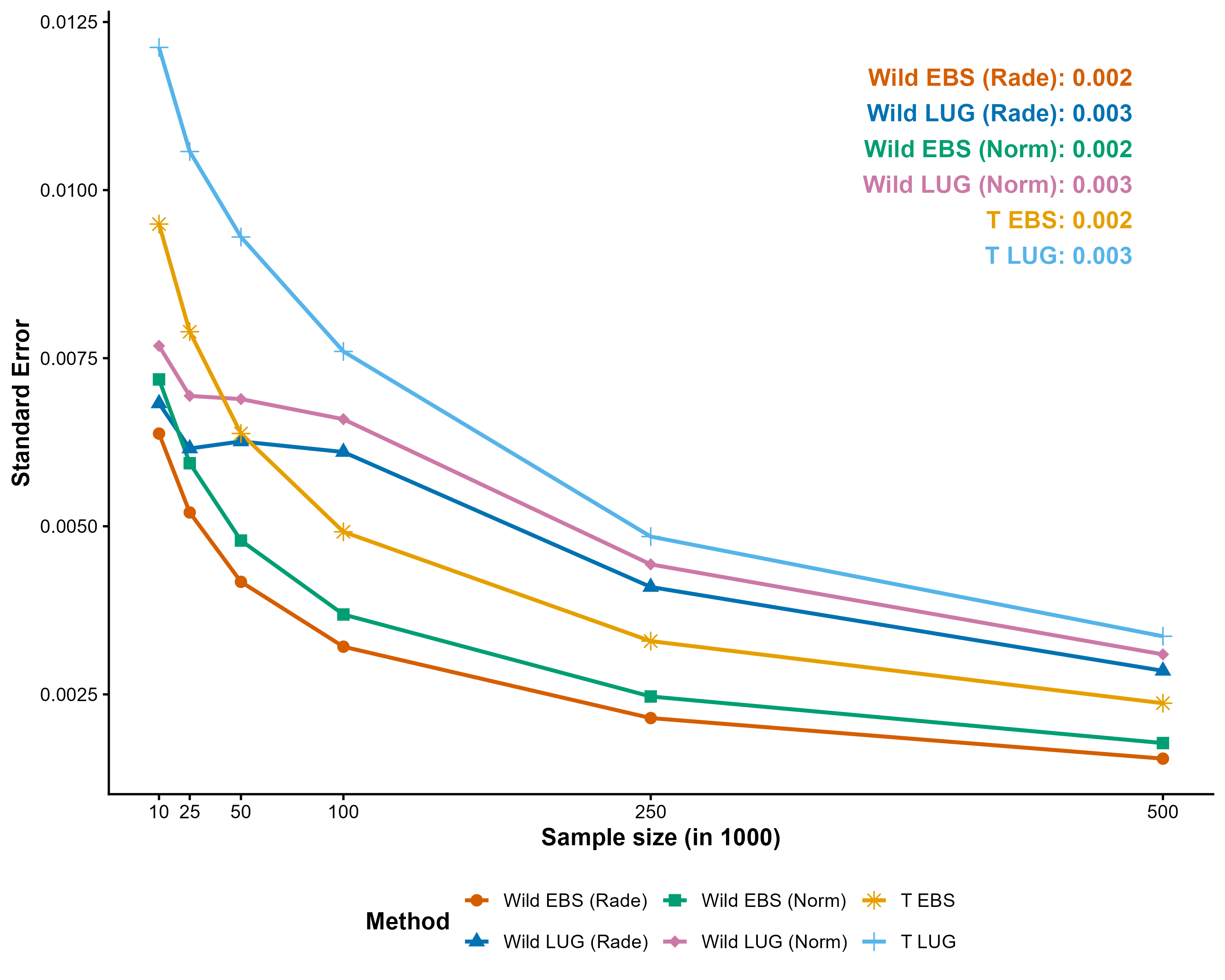}
        \caption{Standard error}
        \label{fig:sub_se_rade_norm}
    \end{subfigure}

    \vspace{0.5em}

    \begin{subfigure}[htbp!]{0.48\linewidth}
        \centering
        \includegraphics[width=\linewidth]{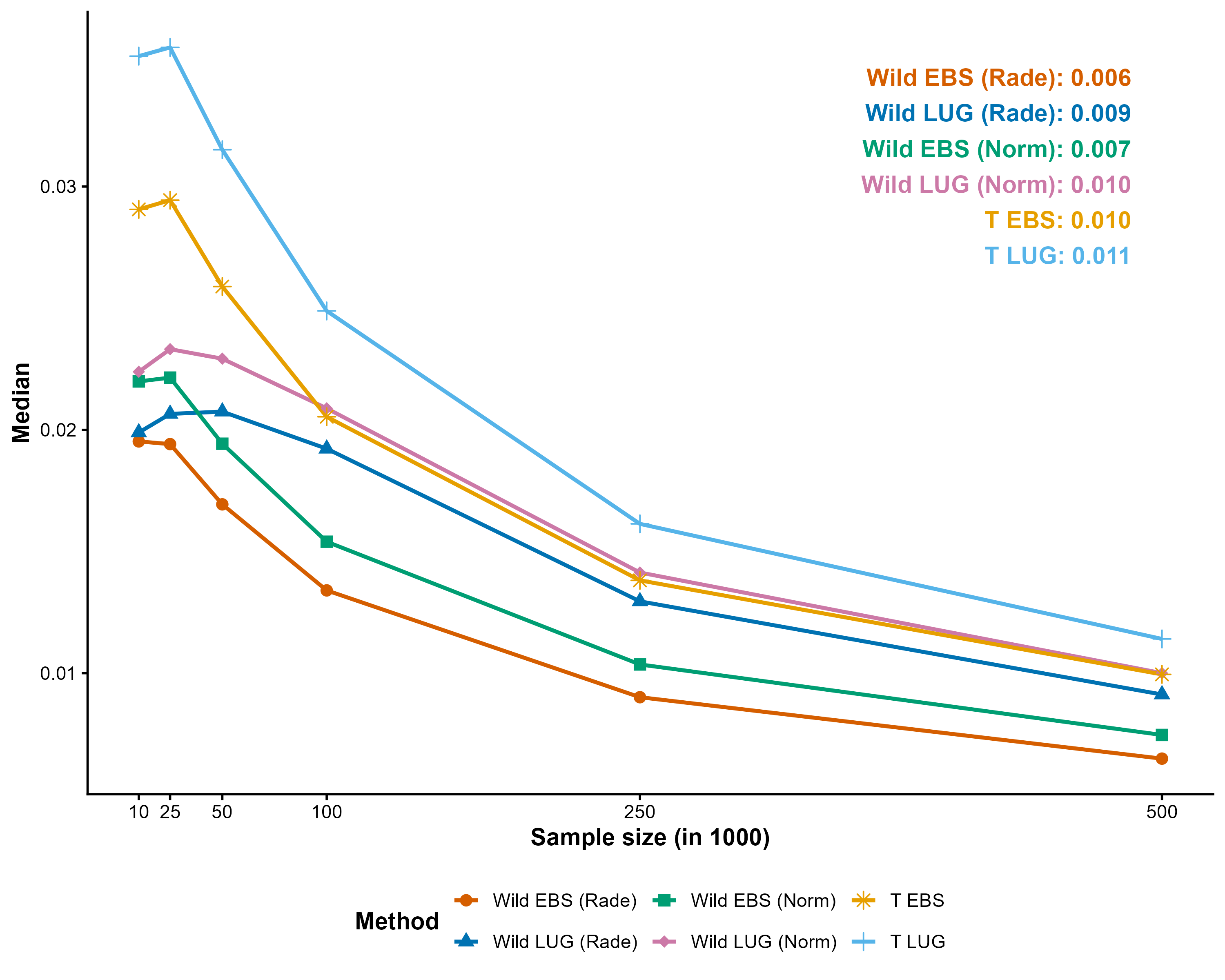}
        \caption{Median interval length}
        \label{fig:sub_med_rade_norm}
    \end{subfigure}
    \hfill
    \begin{subfigure}[htbp!]{0.48\linewidth}
        \centering
        \includegraphics[width=\linewidth]{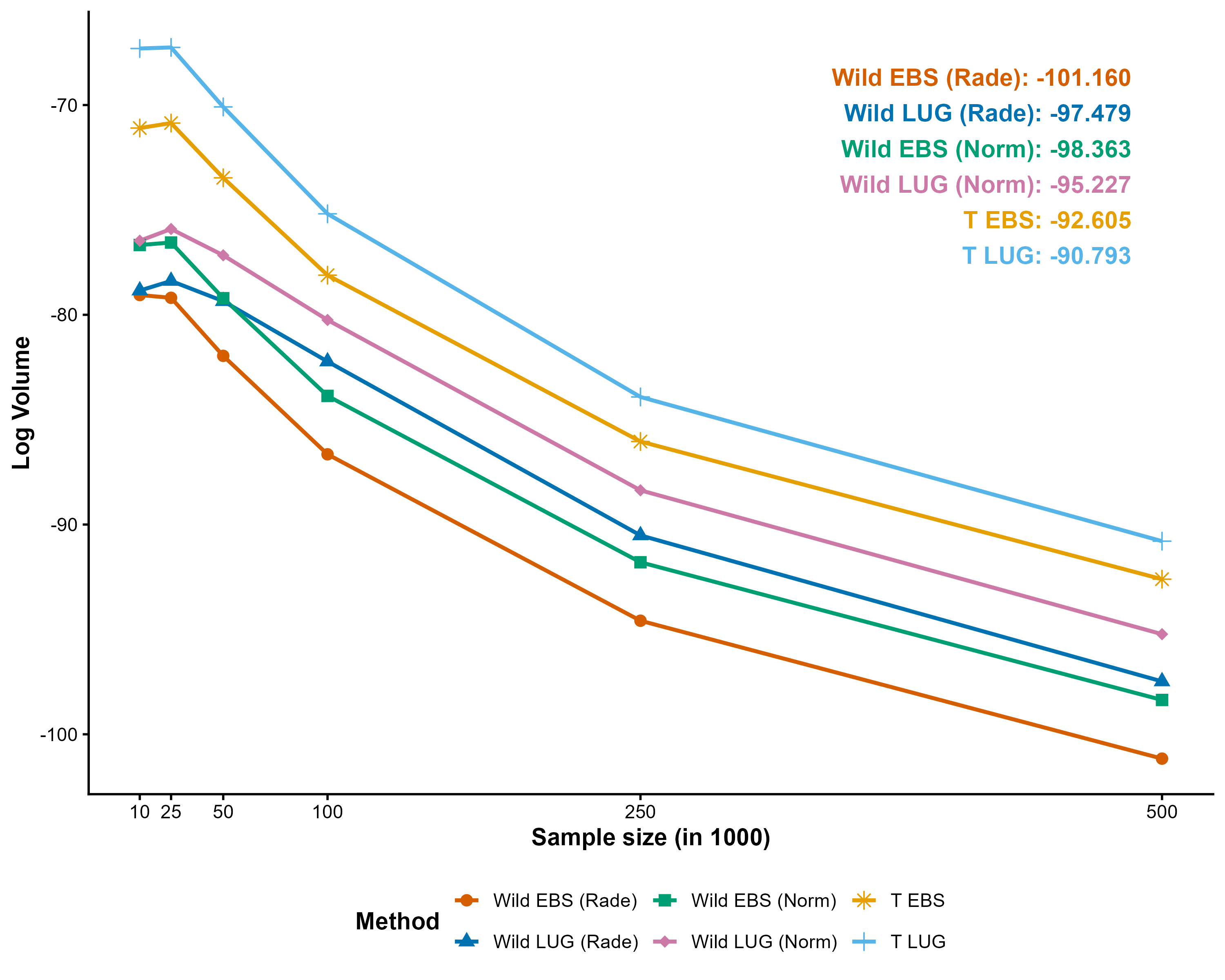}
        \caption{Volume of confidence hyper-rectangle (logarithm)}
        \label{fig:sub_vol_rade_norm}
    \end{subfigure}

    \caption{{Comparison of simultaneous marginal friendly implementation for Wild Bootstrap based on Rademacher and normal weights. Here $d = 20, m = 10$.}}
    \label{fig:rade_norm_comparison}
\end{figure}

\subsection{Sensitivity to the Number of Batches}
To evaluate the empirical effect of the batch quantity $m$, we fix the sample size at $n = 970200$ to ensure the batch size $n/m$ remains an integer across all tested configurations. \citet{zhu2021} heuristically recommend setting $m \in [15, 30]$ for $d \le 10$ and $m \in [d+5, d+10]$ for higher dimensions. Our simulations reveal dimension-dependent behavior. For moderate dimensions ($d=20,~ 30$), the multivariate coverage monotonically increases with larger $m$ (Figure \ref{covg_vary_batch}). Conversely, for $d=50$, the coverage rate begins to deteriorate once $m$ exceeds $d+10$, structurally corroborating the upper-bound threshold suggested by \citet{zhu2021}. Finally, we observe that the average interval length consistently decreases as $m$ increases, successfully preserving the relative performance ordering among the estimators established in our earlier sample size analysis (Figure \ref{vol_rect_vary_batch}).

\begin{figure}[htbp!]
	\centering
	\begin{subfigure}[htbp!]{0.32\linewidth}
		\centering
		\includegraphics[width=\linewidth]{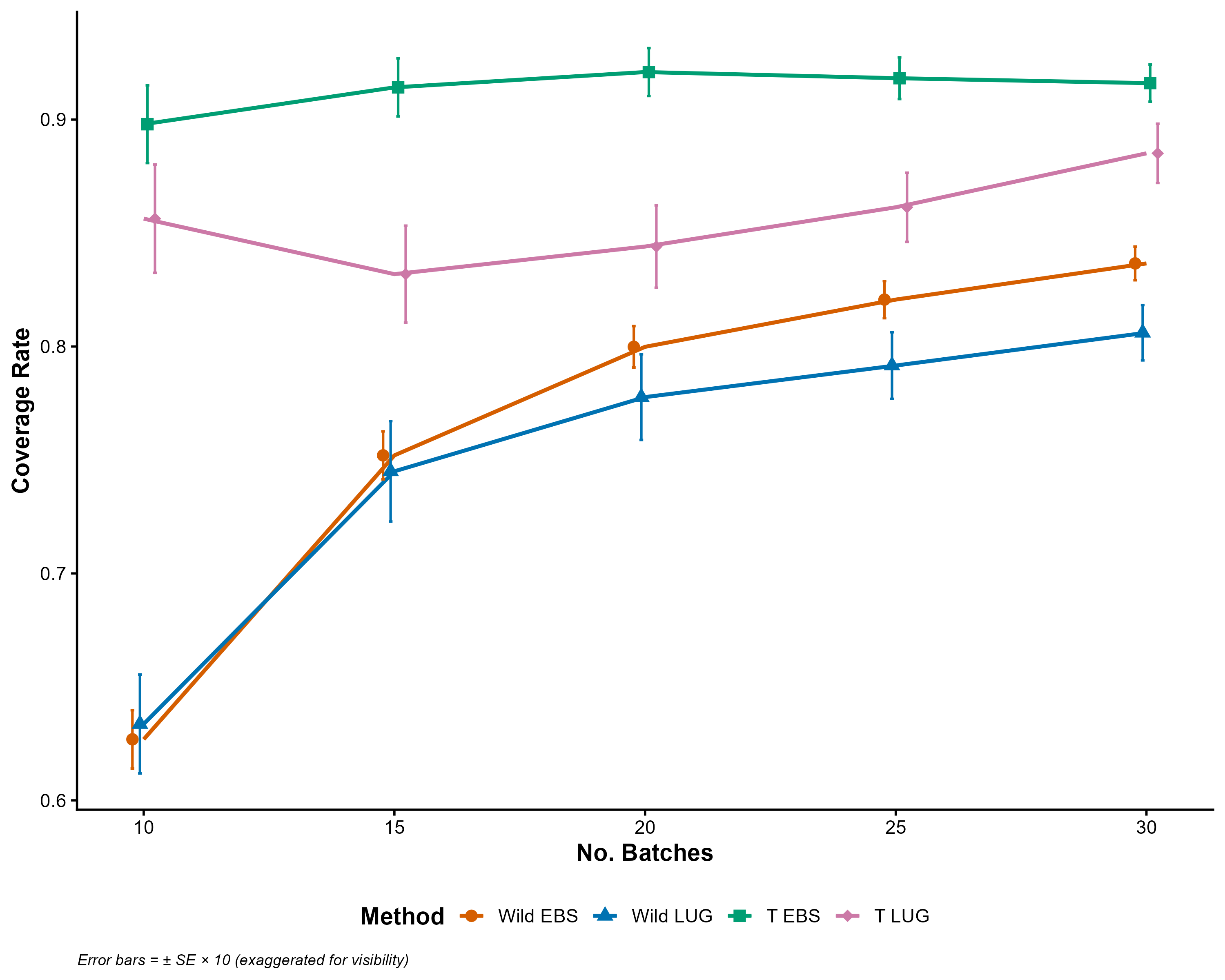}
		\caption{Dimension 20, number of batches 10 to 30.}
		\label{covg_20_batch10_30}
	\end{subfigure}
	\hfill
	\begin{subfigure}[htbp!]{0.32\linewidth}
		\centering
		\includegraphics[width=\linewidth]{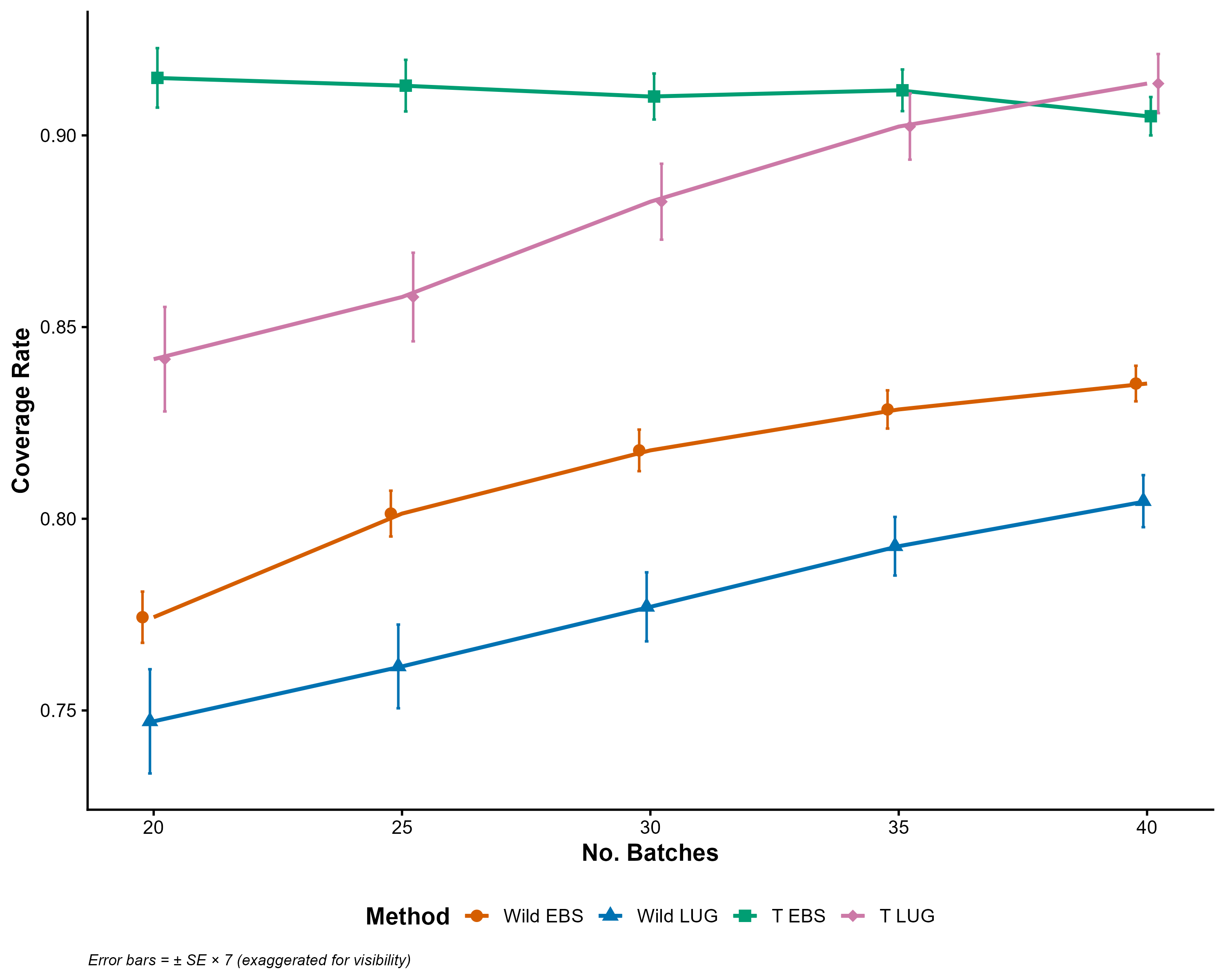}
		\caption{Dimension 30, number of batches 20 to 40.}
		\label{covg_30_batch20_40}
	\end{subfigure}
	\hfill
	\begin{subfigure}[htbp!]{0.32\linewidth}
		\centering
		\includegraphics[width=\linewidth]{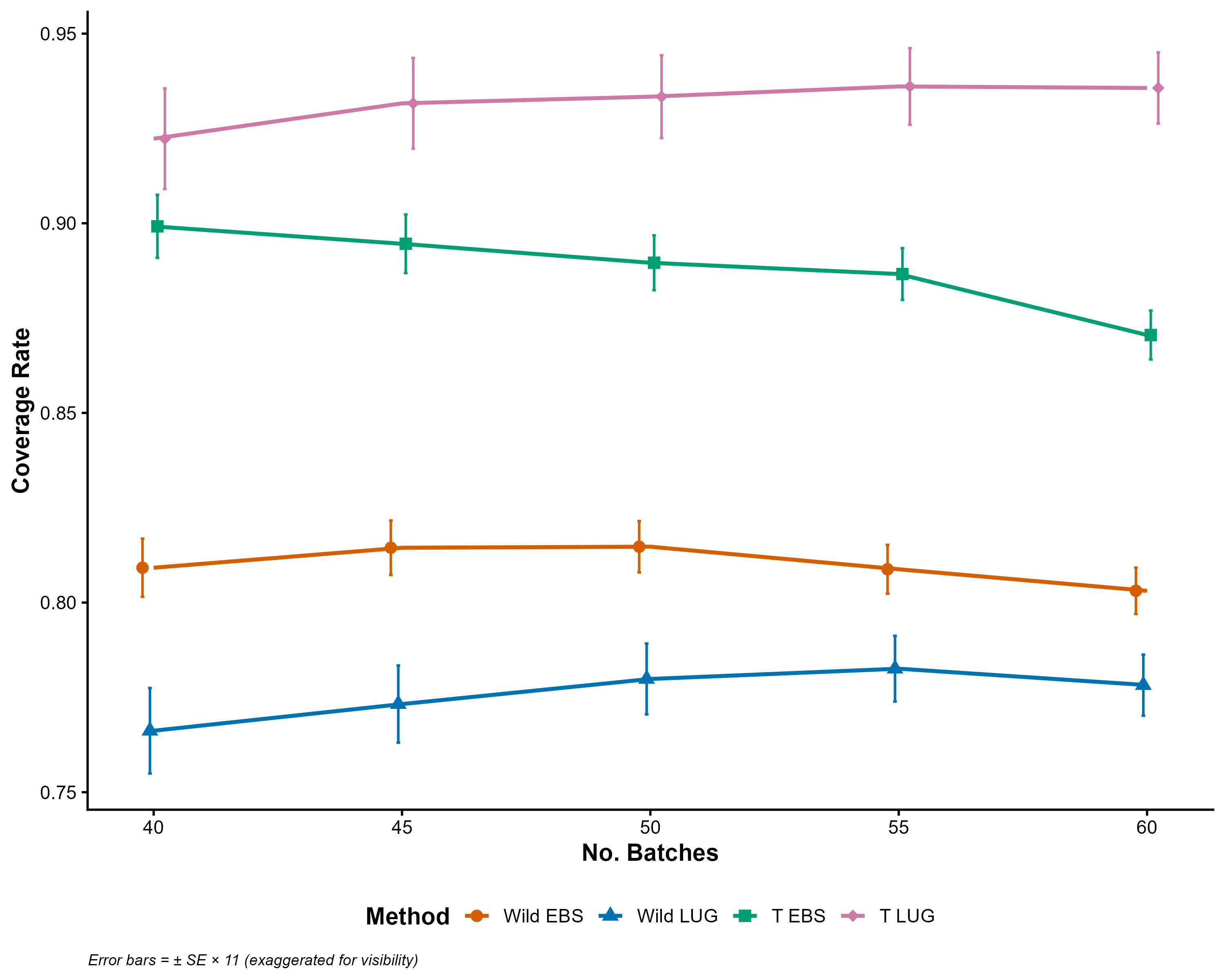}
		\caption{Dimension 50, number of batches 40 to 60.}
		\label{covg_50_batch40_60}
	\end{subfigure}
	
	\vspace{0.5cm}
	
	\begin{subfigure}[htbp!]{0.32\linewidth}
		\centering
		\includegraphics[width=\linewidth]{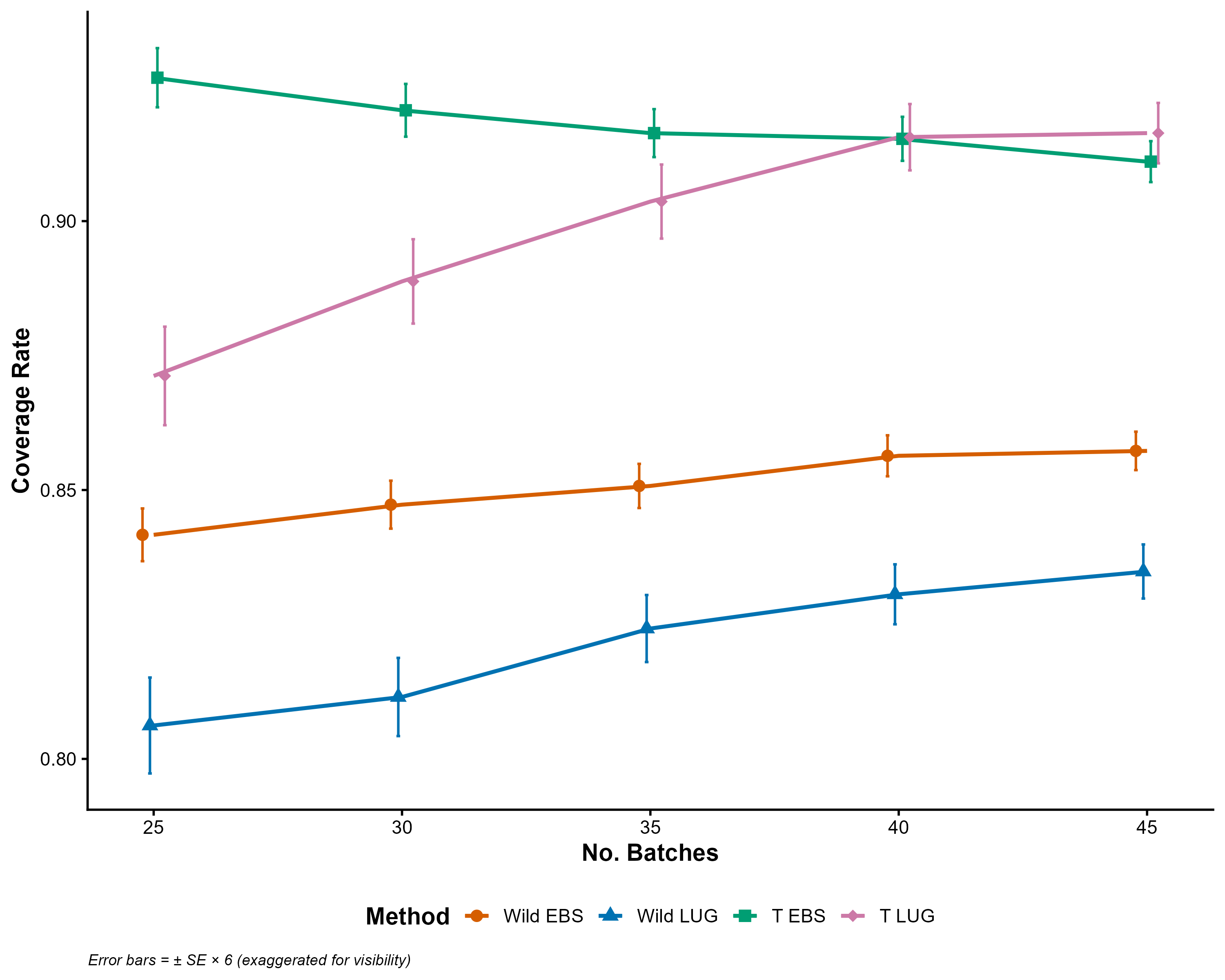}
		\caption{Dimension 20, number of batches 25 to 45.}
		\label{covg_20_batch25_45}
	\end{subfigure}
	\hfill
	\begin{subfigure}[htbp!]{0.32\linewidth}
		\centering
		\includegraphics[width=\linewidth]{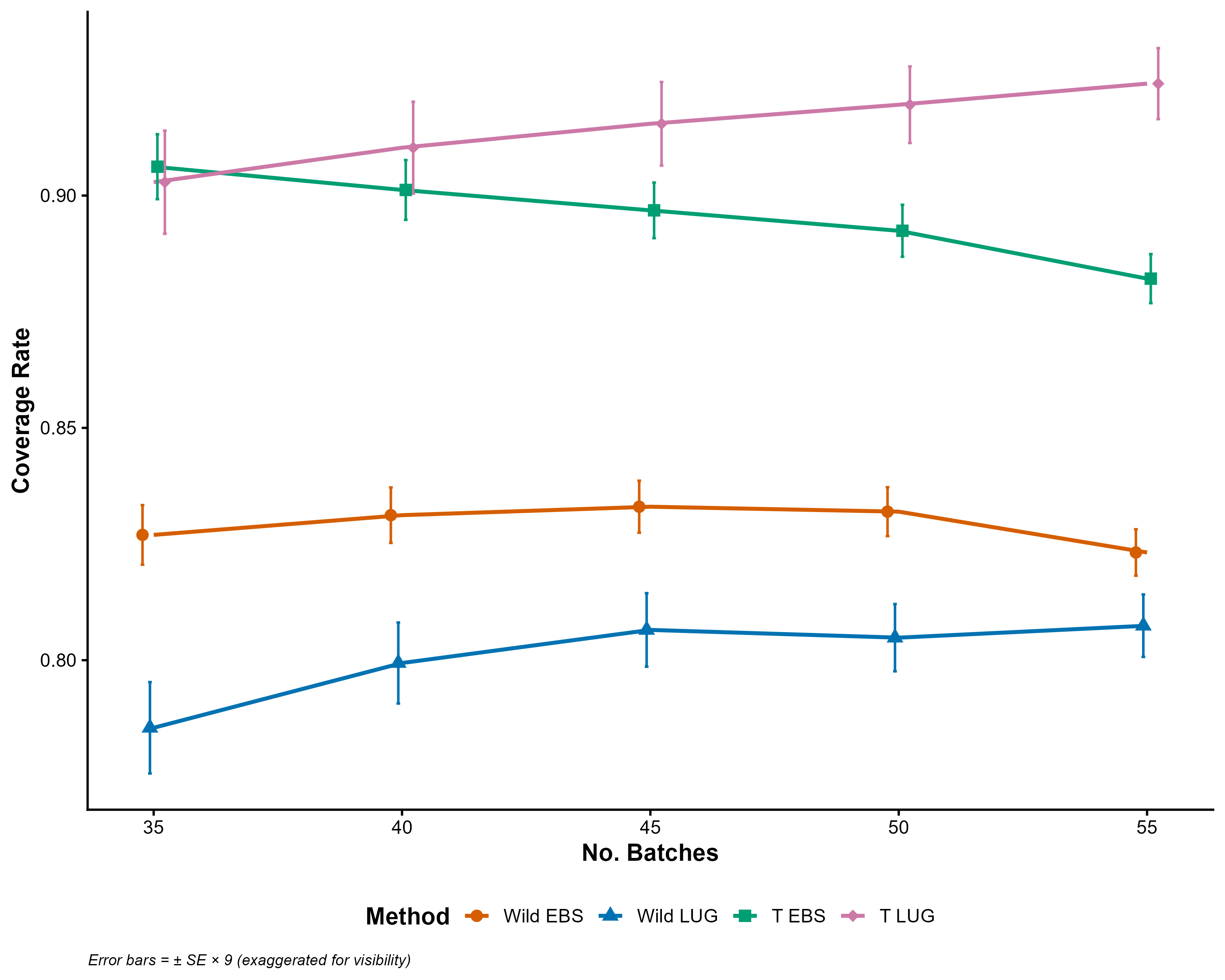}
		\caption{Dimension 30, number of batches 35 to 55.}
		\label{covg_30_batch35_55}
	\end{subfigure}
	\hfill
	\begin{subfigure}[htbp!]{0.32\linewidth}
		\centering
		\includegraphics[width=\linewidth]{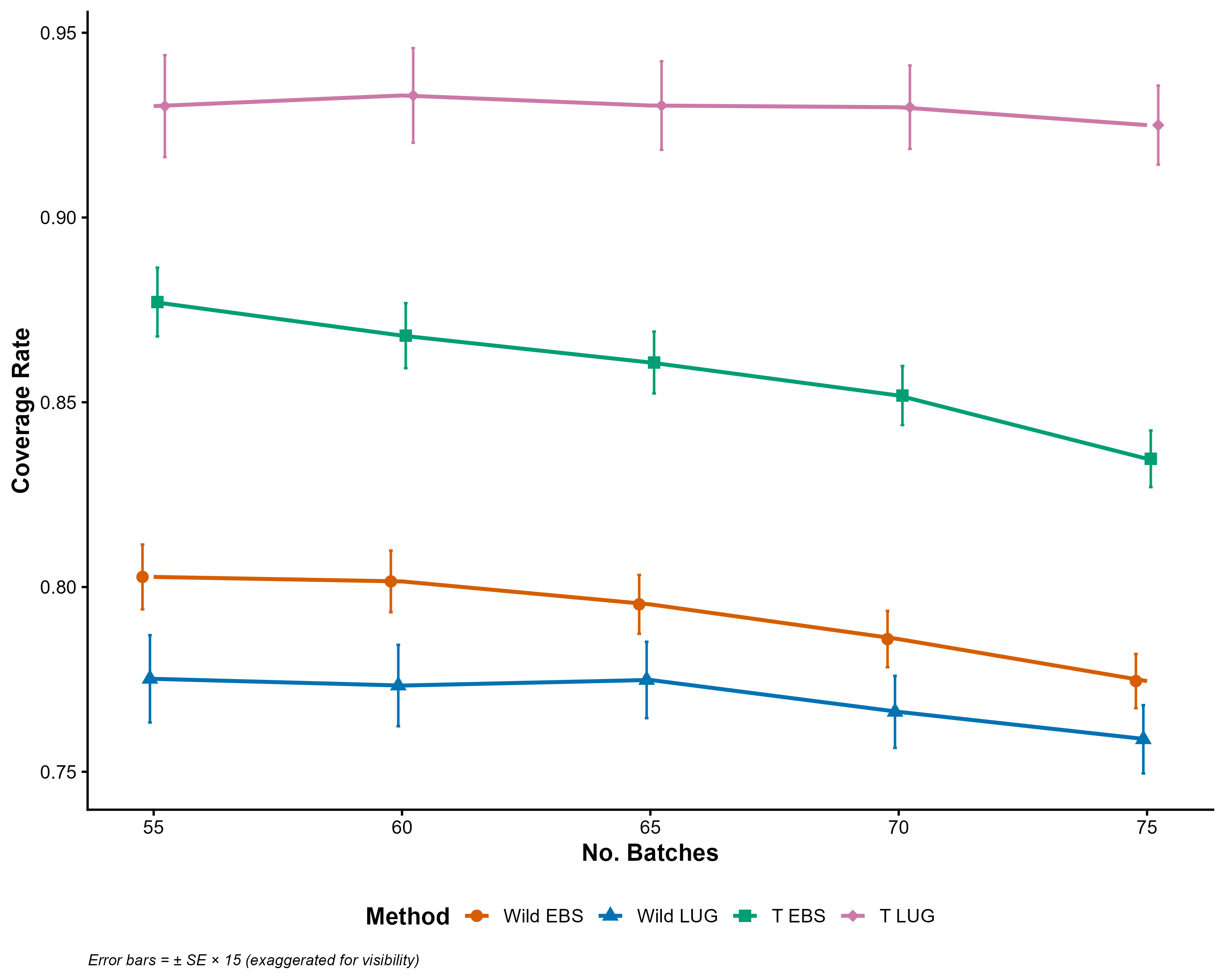}
		\caption{Dimension 50, number of batches 55 to 75.}
		\label{covg_50_batch55_75}
	\end{subfigure}
	
	\caption{{Multivariate coverage based on simultaneous marginal friendly implementation, for varying dimension and number of batches for fixed sample size problem.}}
	\label{covg_vary_batch}
\end{figure}

\begin{figure}[htbp!]
	\centering
	
	\begin{subfigure}[htbp!]{0.32\linewidth}
		\centering
		\includegraphics[width=\linewidth]{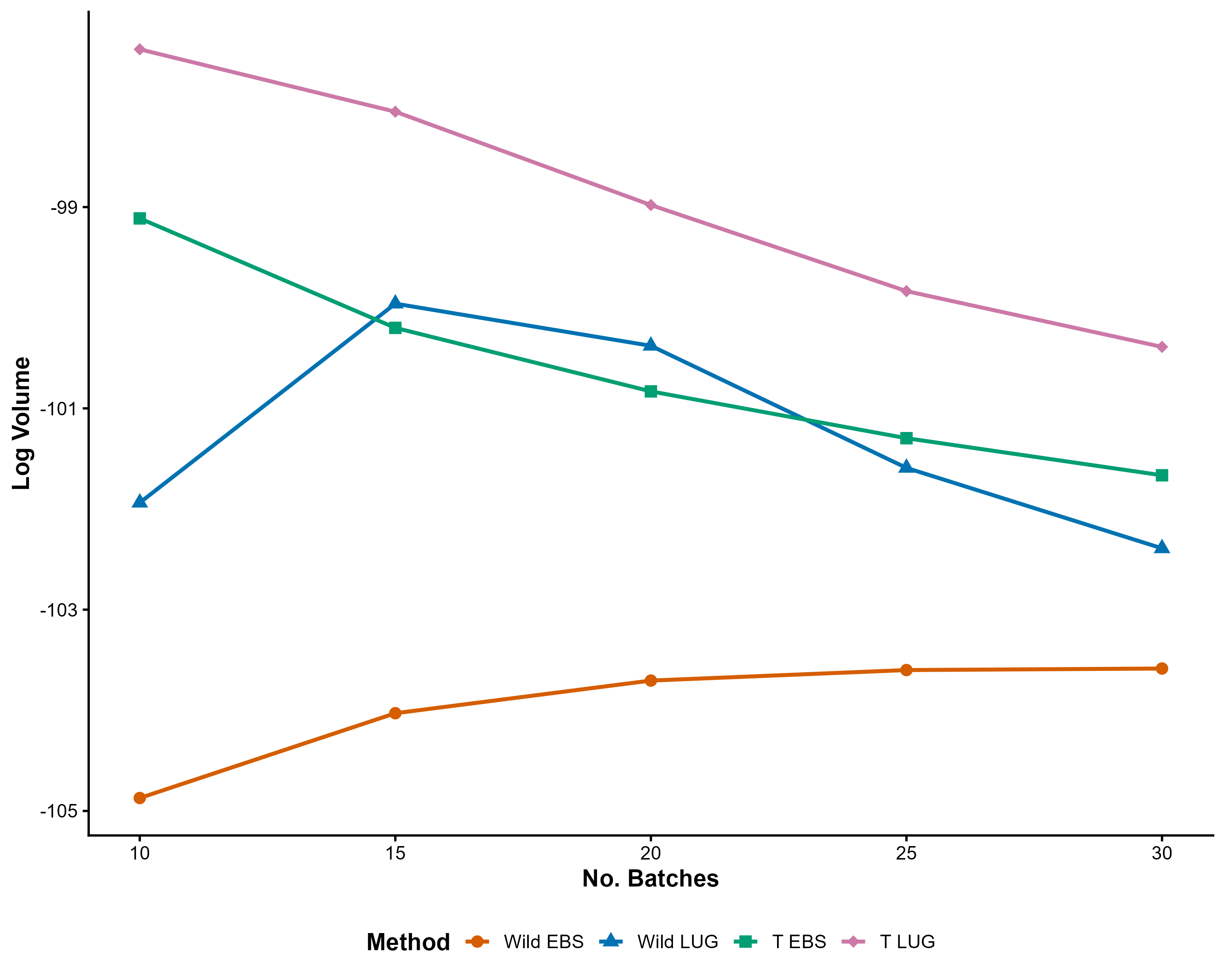}
		\caption{Dimension 20, number of batches 10 to 30.}
		\label{vol_rect_20_batch10_30}
	\end{subfigure}
	\hfill
	\begin{subfigure}[htbp!]{0.32\linewidth}
		\centering
		\includegraphics[width=\linewidth]{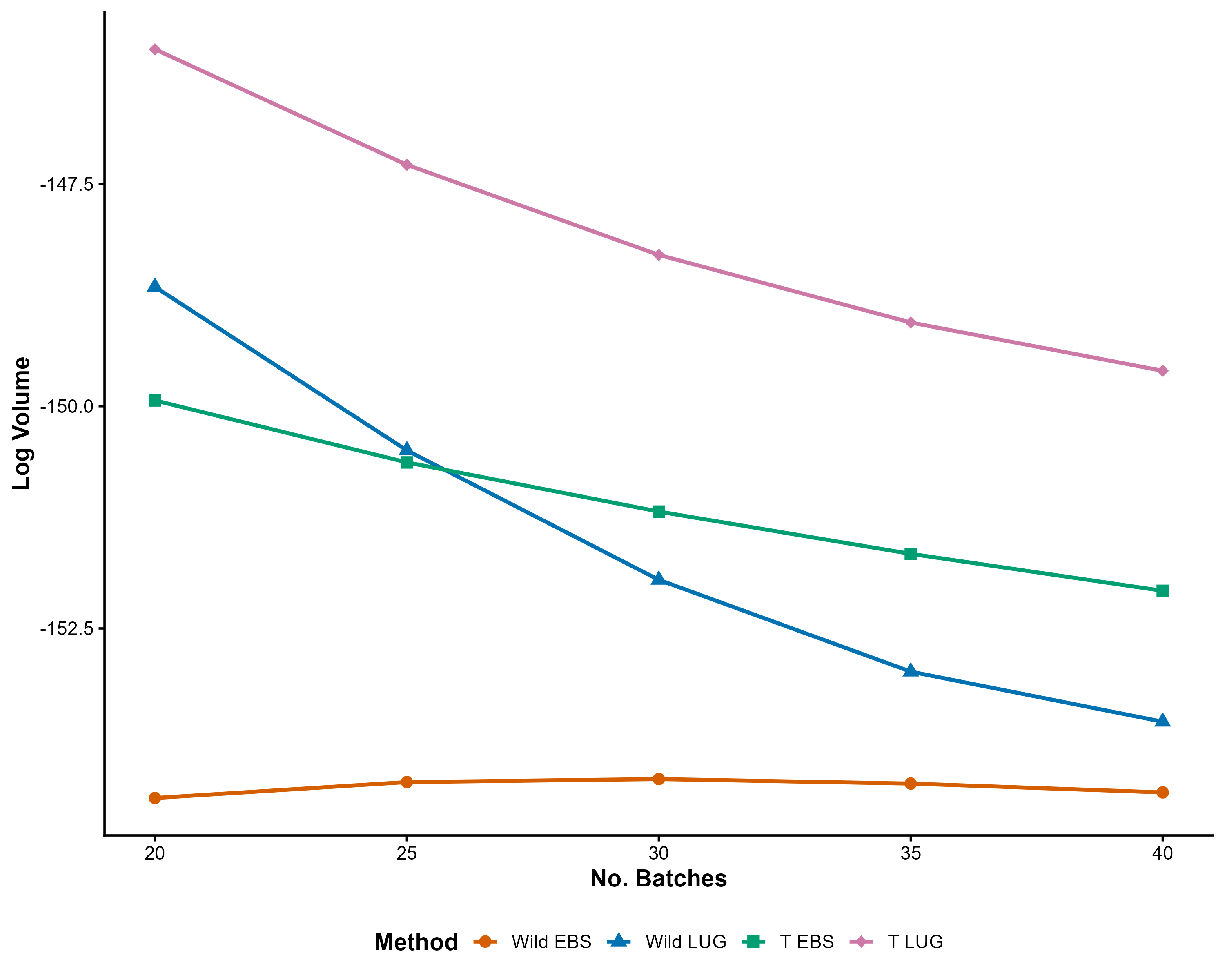}
		\caption{Dimension 30, number of batches 20 to 40.}
		\label{vol_rect_30_batch20_40}
	\end{subfigure}
	\hfill
	\begin{subfigure}[htbp!]{0.32\linewidth}
		\centering
		\includegraphics[width=\linewidth]{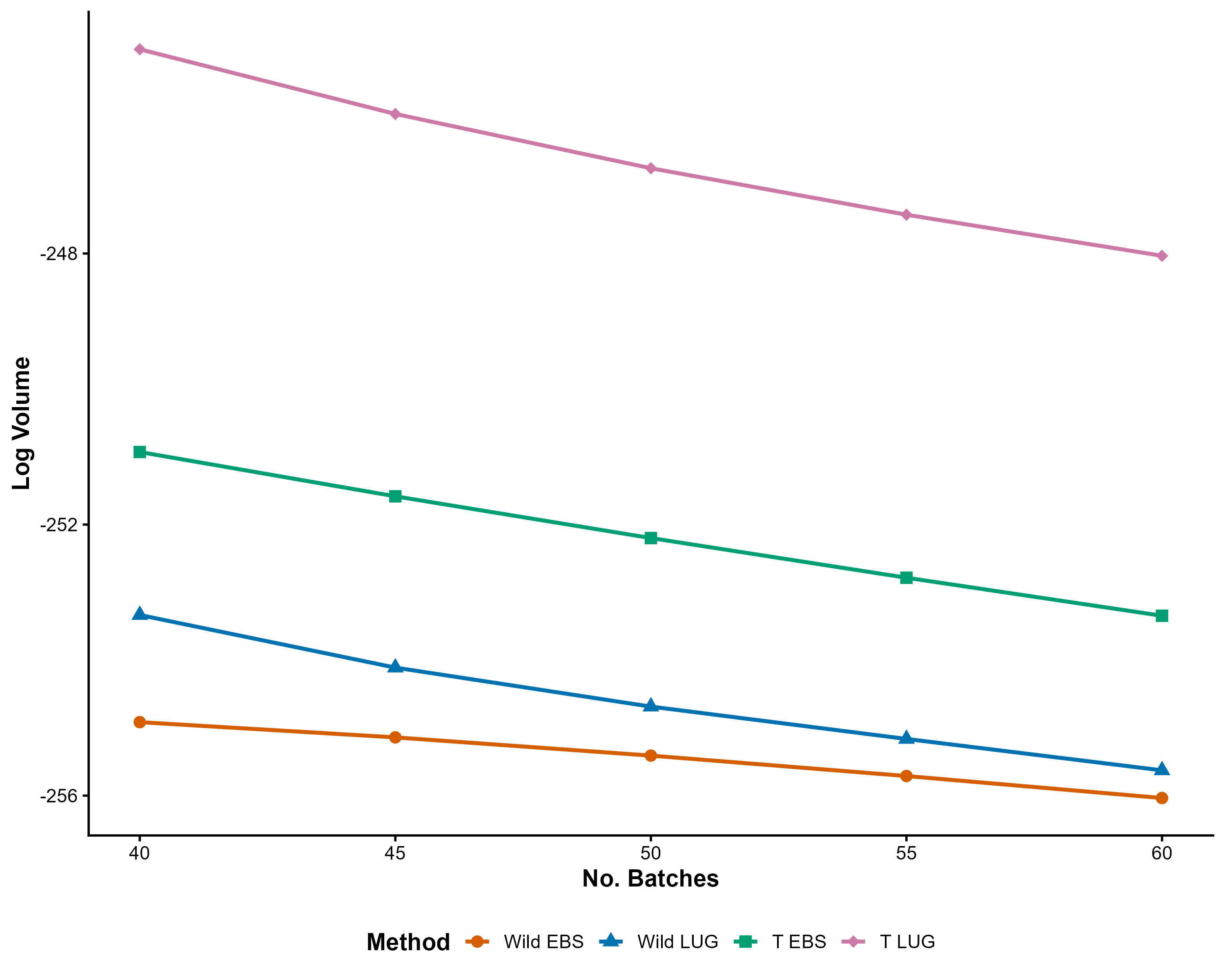}
		\caption{Dimension 50, number of batches 40 to 60.}
		\label{vol_rect_50_batch40_60}
	\end{subfigure}
	
	\vspace{0.5cm}
	
	\begin{subfigure}[htbp!]{0.32\linewidth}
		\centering
		\includegraphics[width=\linewidth]{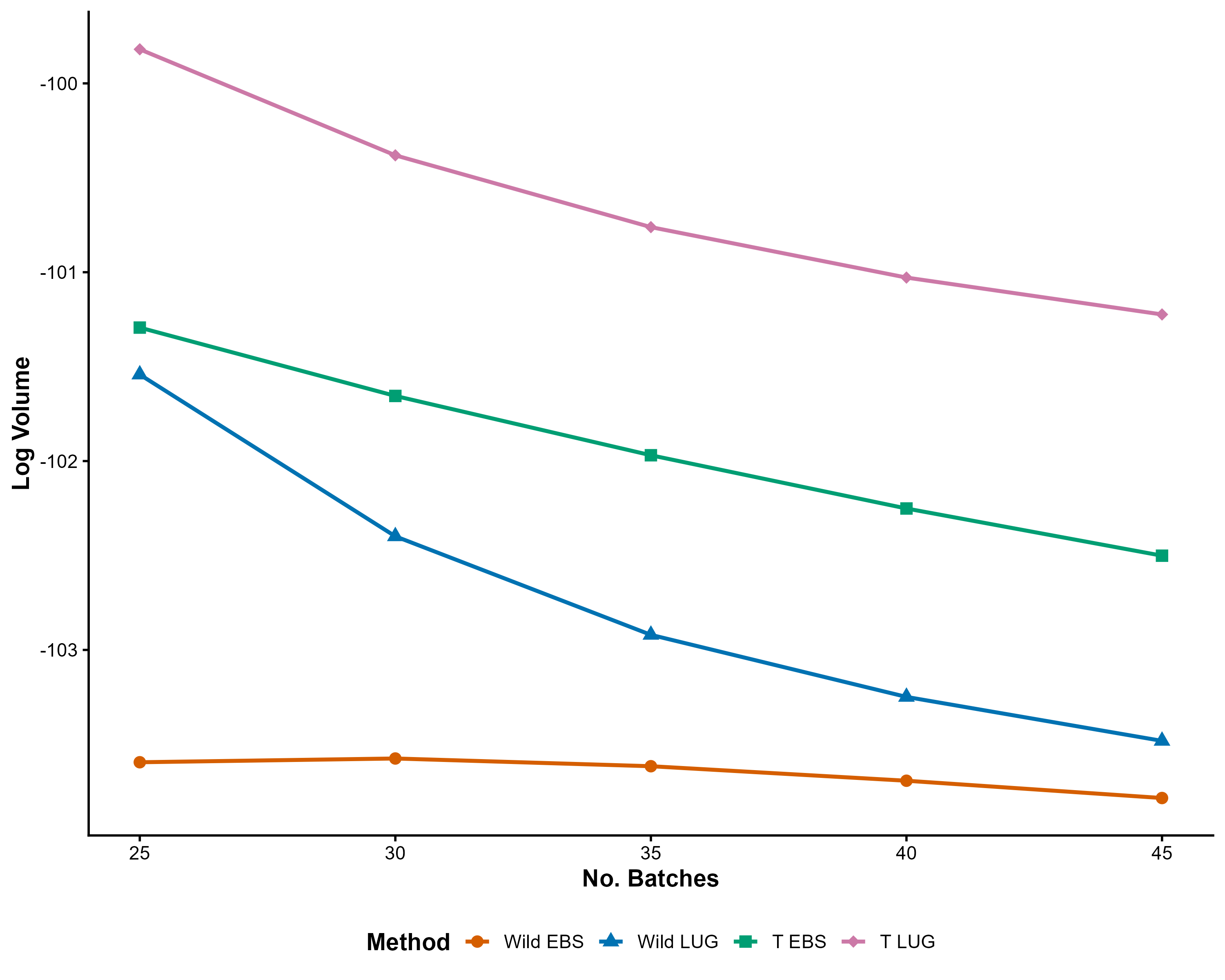}
		\caption{Dimension 20, number of batches 25 to 45.}
		\label{vol_rect_20_batch25_45}
	\end{subfigure}
	\hfill
	\begin{subfigure}[htbp!]{0.32\linewidth}
		\centering
		\includegraphics[width=\linewidth]{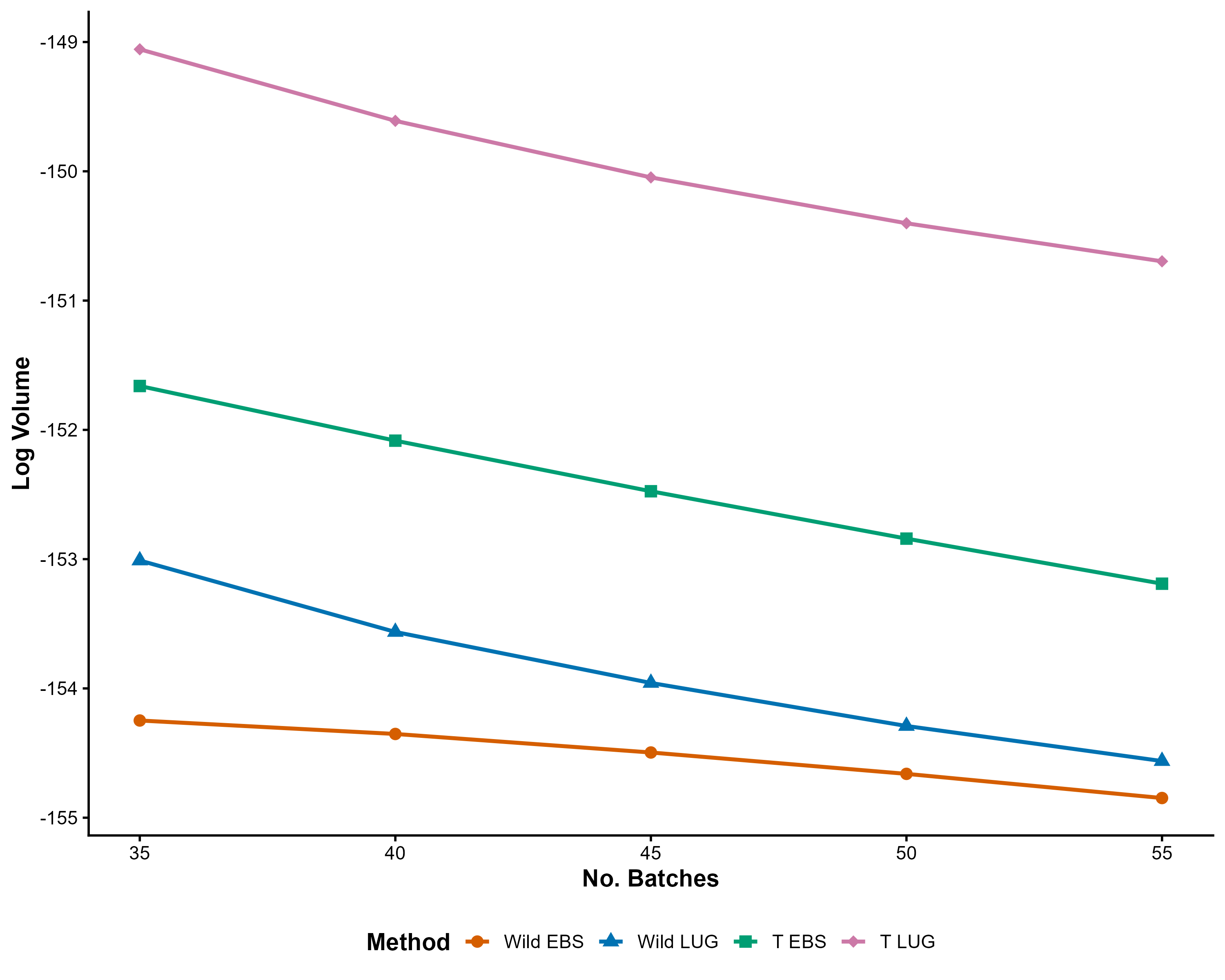}
		\caption{Dimension 30, number of batches 35 to 55.}
		\label{vol_rect_30_batch35_55}
	\end{subfigure}
	\hfill
	\begin{subfigure}[htbp!]{0.32\linewidth}
		\centering
		\includegraphics[width=\linewidth]{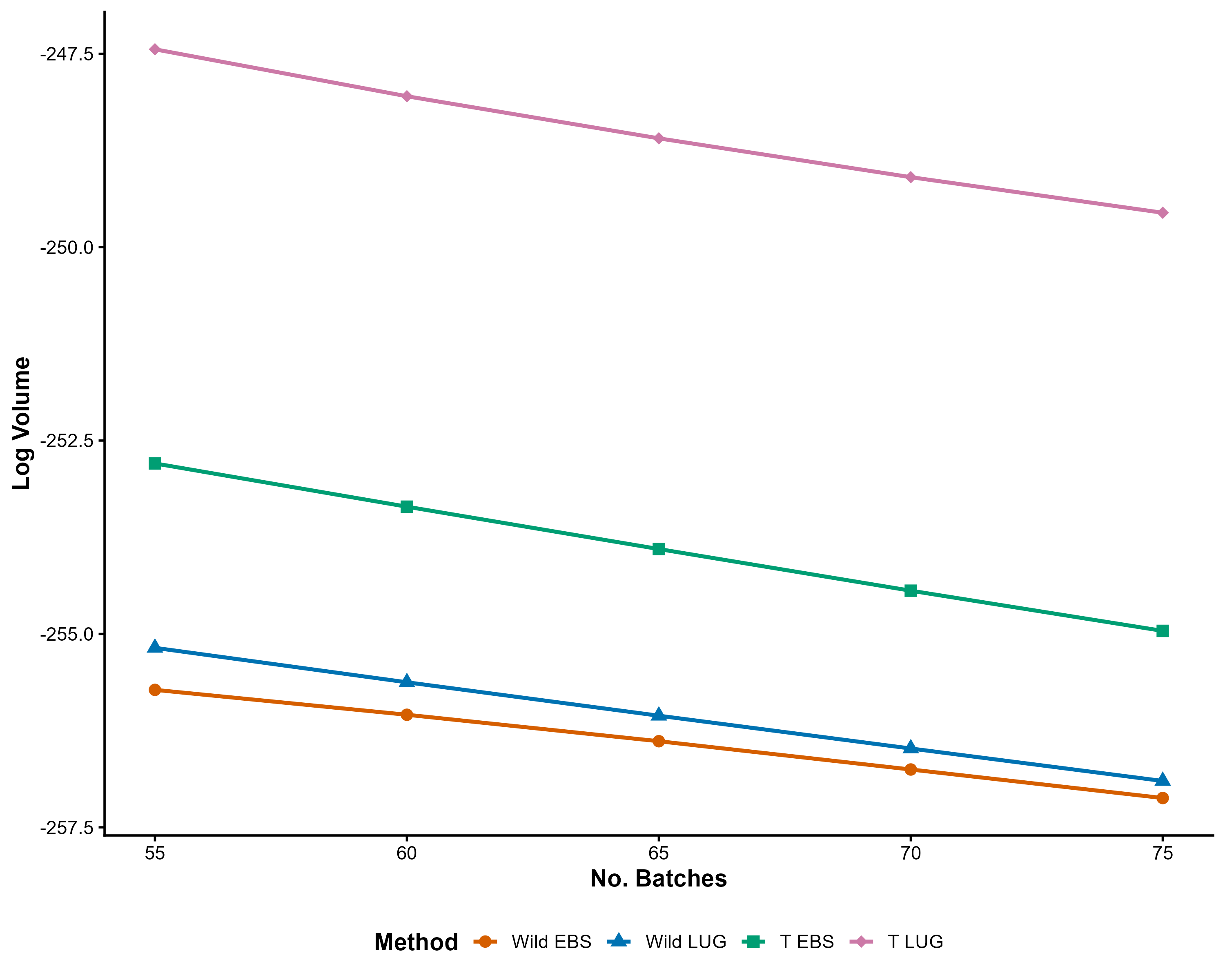}
		\caption{Dimension 50, number of batches 55 to 75.}
		\label{vol_rect_50_batch55_75}
	\end{subfigure}
	
	\caption{{Volume (logarithm) of confidence hyper-rectangle based on simultaneous marginal friendly implementation, for varying dimension and number of batches for fixed sample size problem.
    }}
	\label{vol_rect_vary_batch}
\end{figure}

\FloatBarrier

\subsection{Comparison with the Bonferroni Correction}
Figure \ref{Bonf_comp} compares the proposed simultaneous regions against a baseline Bonferroni correction applied to the marginal intervals. When the covariates possess an independent structure, the Bonferroni method performs optimally at large sample sizes, which is expected. However, under dependent covariate structures (such as Toeplitz or equi-correlation), the Bonferroni approach suffers from degraded multivariate coverage. Because aggregated marginal length metrics fail to adequately capture joint estimation dynamics in these dependent settings, we evaluate performance strictly via the overall volume of the confidence hyper-rectangles. We observe that the Lugsail $t$-copula implementation (T LUG) successfully recovers the best multivariate coverage rates, achieving this by generating the largest hyper-rectangular volumes across the evaluated sample sizes, with the Bonferroni method yielding the second largest. Consistent with prior results, the volumes of the Lugsail-corrected regions decay more slowly than their standard EBS counterparts. Ultimately, this conservative volume retention by the Lugsail $t$-copula is structurally necessary to robustly capture the finite-sample dependencies and outperform the Bonferroni correction in highly correlated designs.

\begin{figure}[htbp!]
	\centering
	
	\begin{subfigure}[htbp!]{0.32\linewidth}
		\centering
		\includegraphics[width=\linewidth]{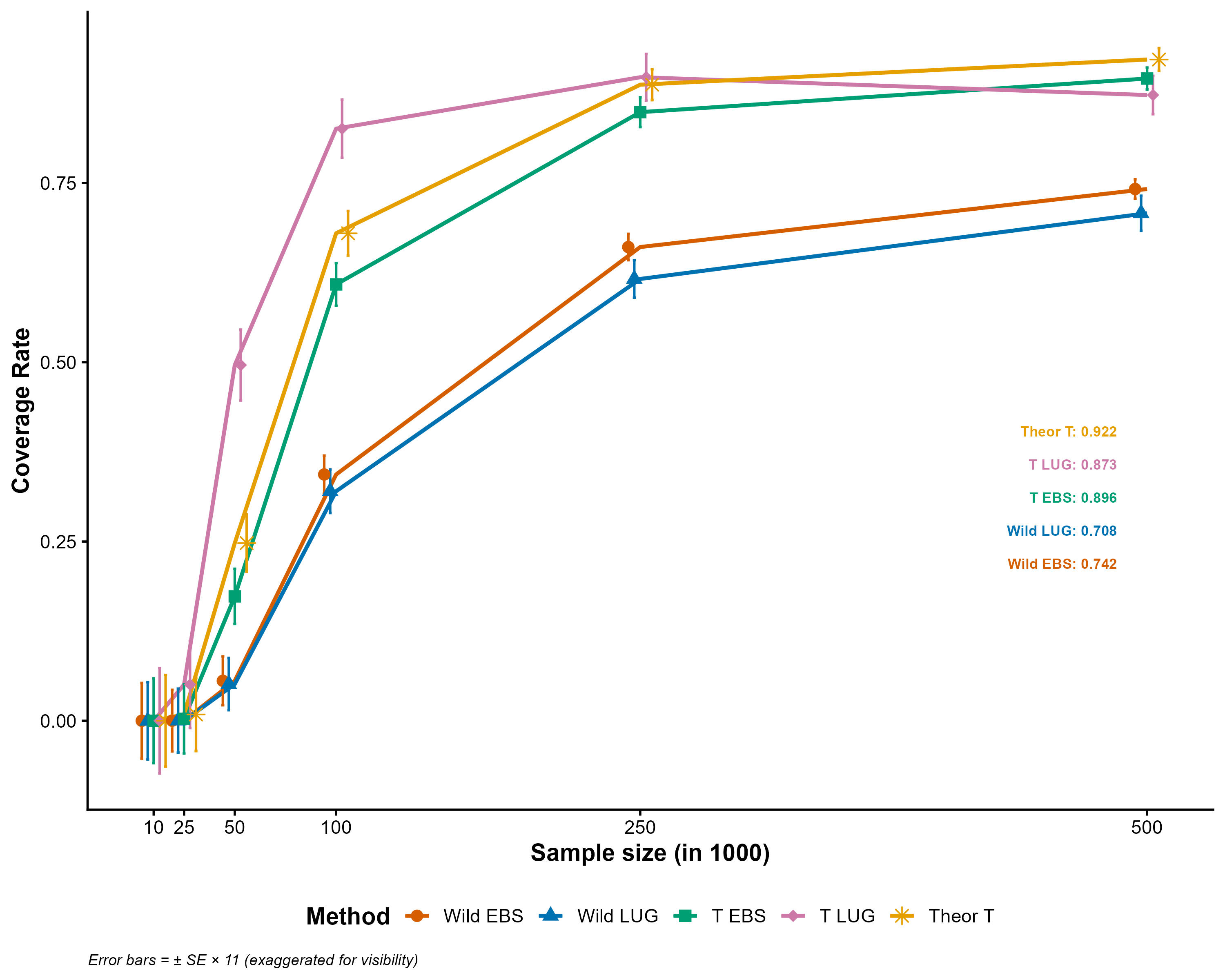}
		\caption{Multivariate coverage in independent case of $x$.}
		\label{covg_Bonf_indep}
	\end{subfigure}
	\hfill
	\begin{subfigure}[htbp!]{0.32\linewidth}
		\centering
		\includegraphics[width=\linewidth]{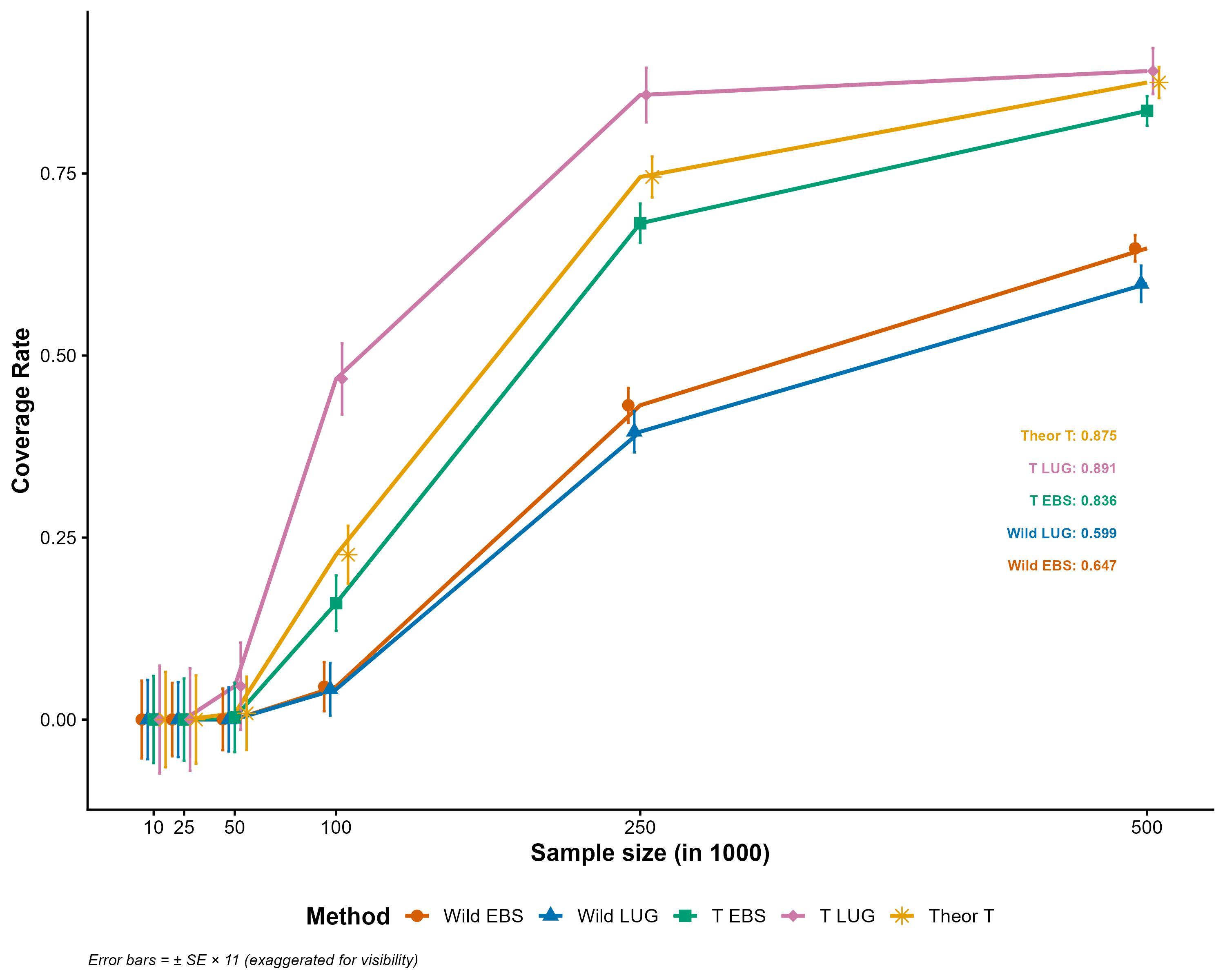}
		\caption{Multivariate coverage in equi-correlation case of $x$.}
		\label{covg_Bonf_equiv}
	\end{subfigure}
	\hfill
	\begin{subfigure}[htbp!]{0.32\linewidth}
		\centering
		\includegraphics[width=\linewidth]{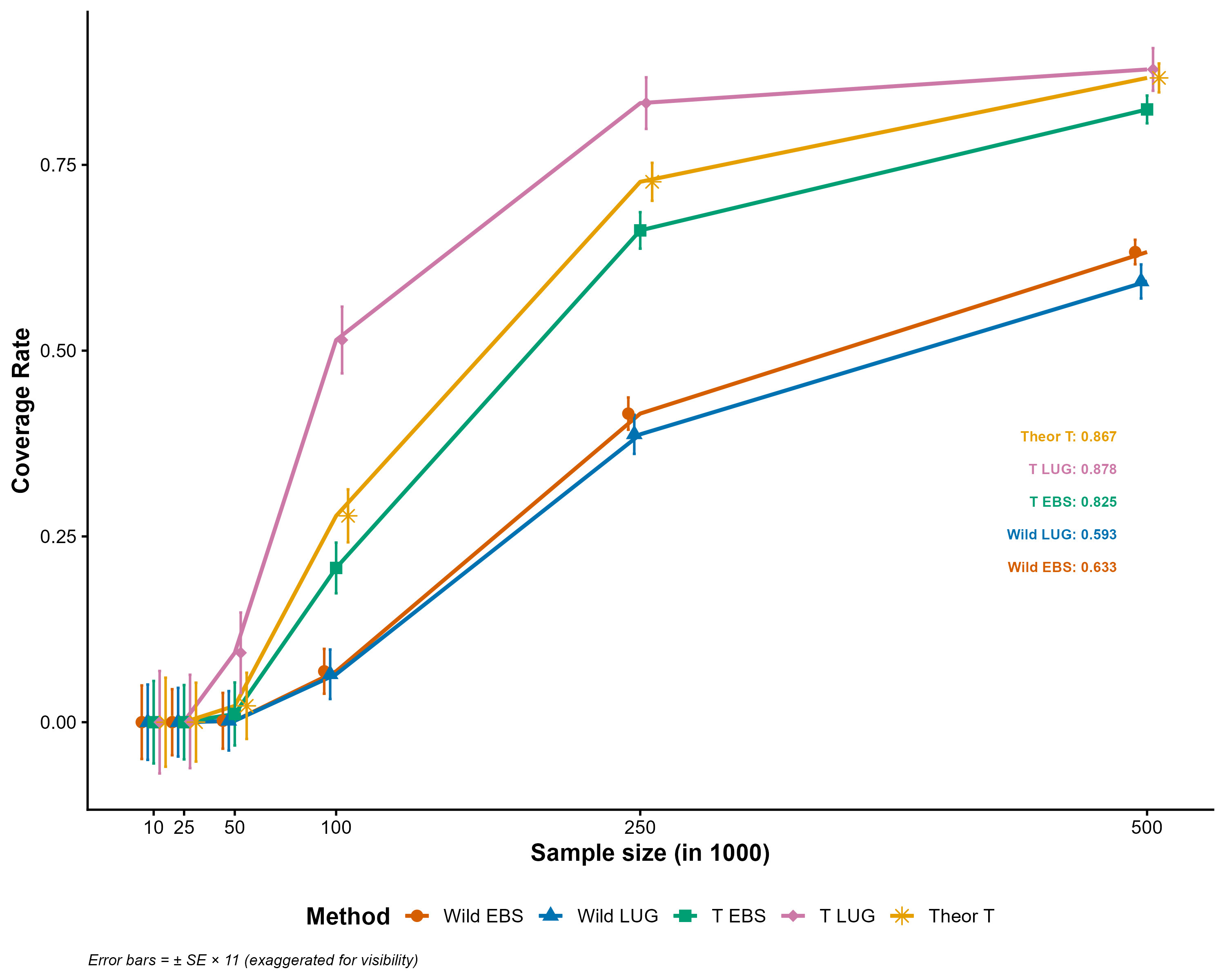}
		\caption{Multivariate coverage in Toeplitz case of $x$.}
		\label{covg_Bonf_toep}
	\end{subfigure}
	
	\vspace{0.5cm}
	
	\begin{subfigure}[htbp!]{0.32\linewidth}
		\centering
		\includegraphics[width=\linewidth]{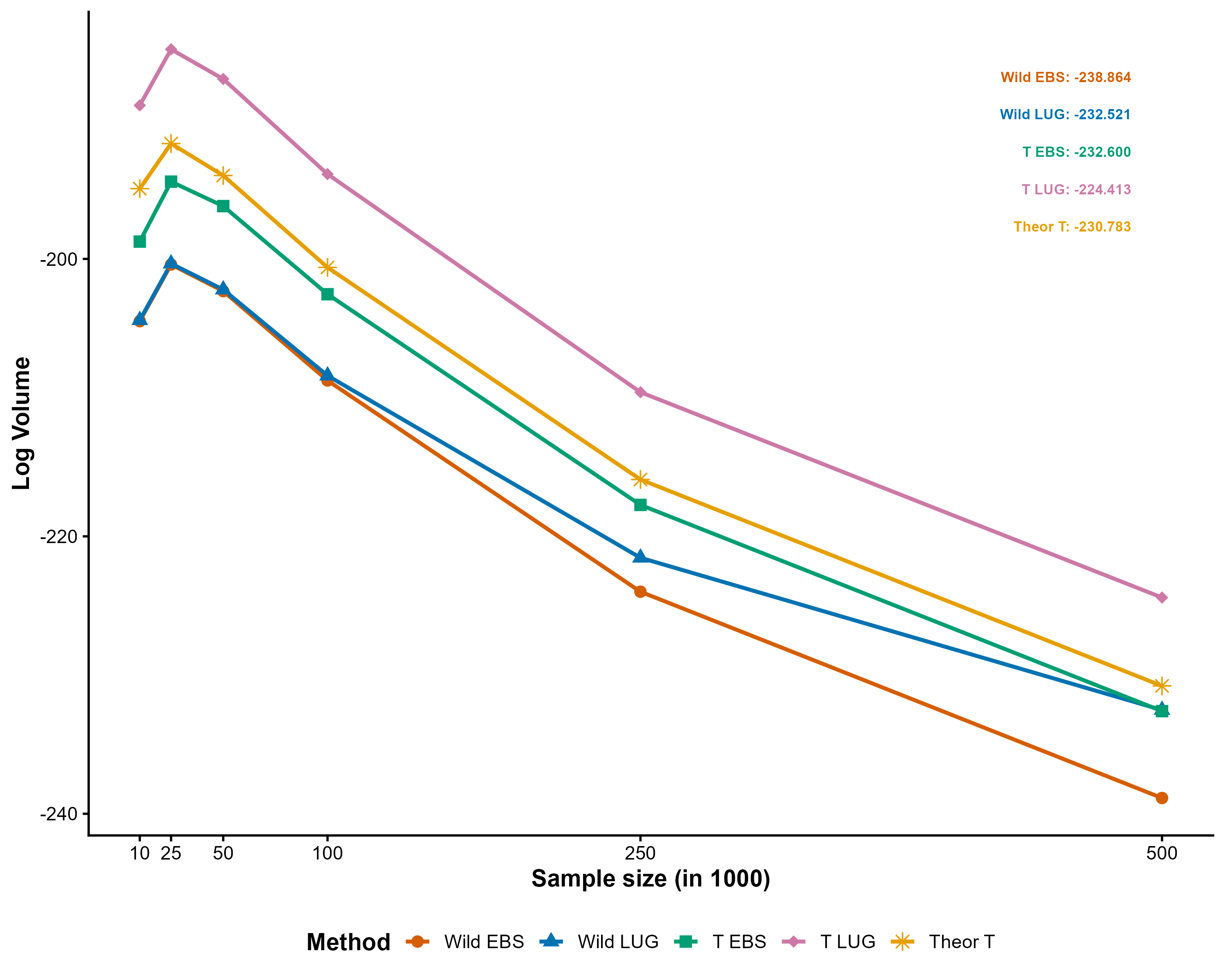}
		\caption{Volume (logarithm) of hyper-rectangle in independent case of $x$.}
		\label{vol_rect_Bonf_indep}
	\end{subfigure}
	\hfill
	\begin{subfigure}[htbp!]{0.32\linewidth}
		\centering
		\includegraphics[width=\linewidth]{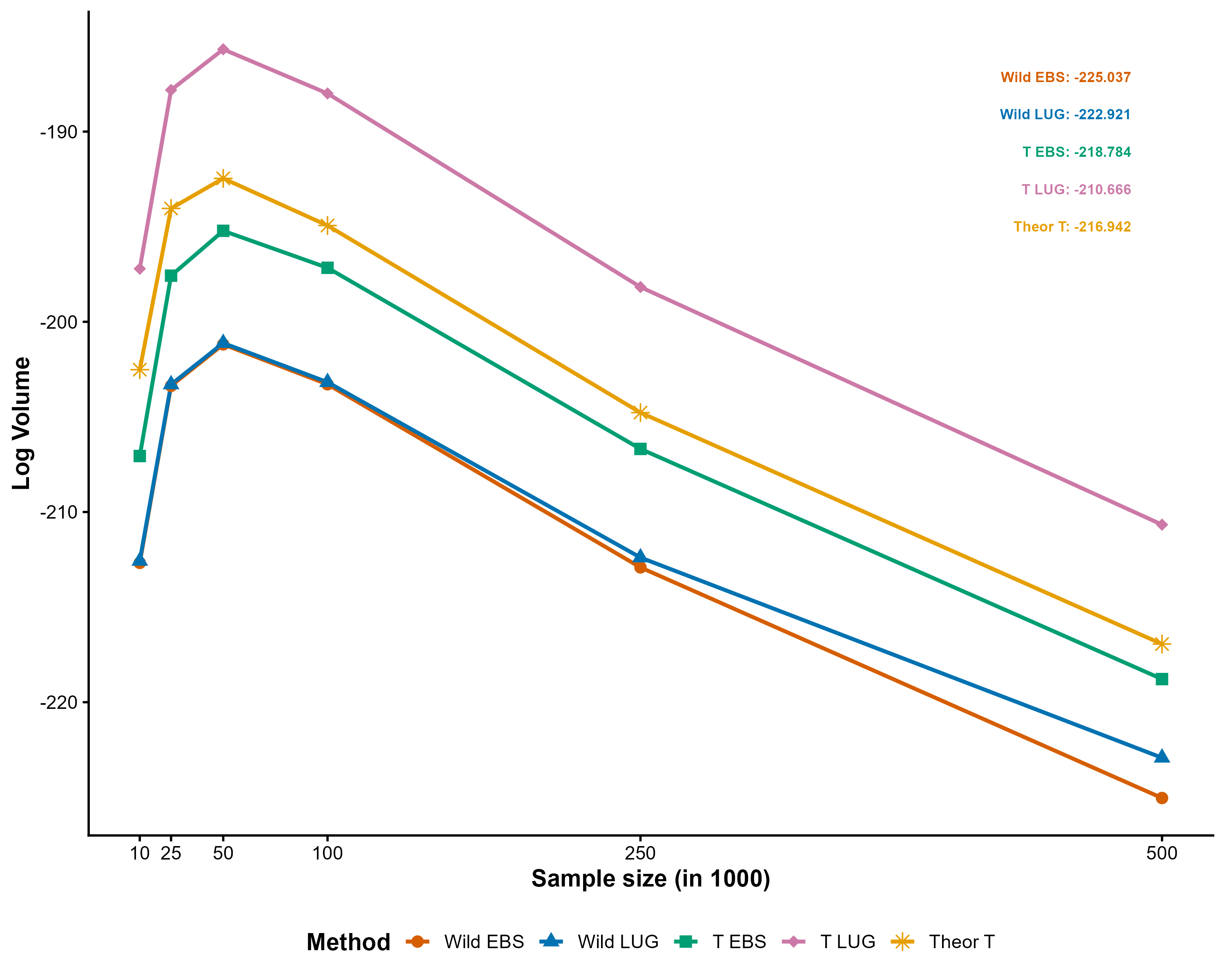}
		\caption{Volume (logarithm) of hyper-rectangle in equi-correlation case of $x$.}
		\label{vol_rect_Bonf_equiv}
	\end{subfigure}
	\hfill
	\begin{subfigure}[htbp!]{0.32\linewidth}
		\centering
		\includegraphics[width=\linewidth]{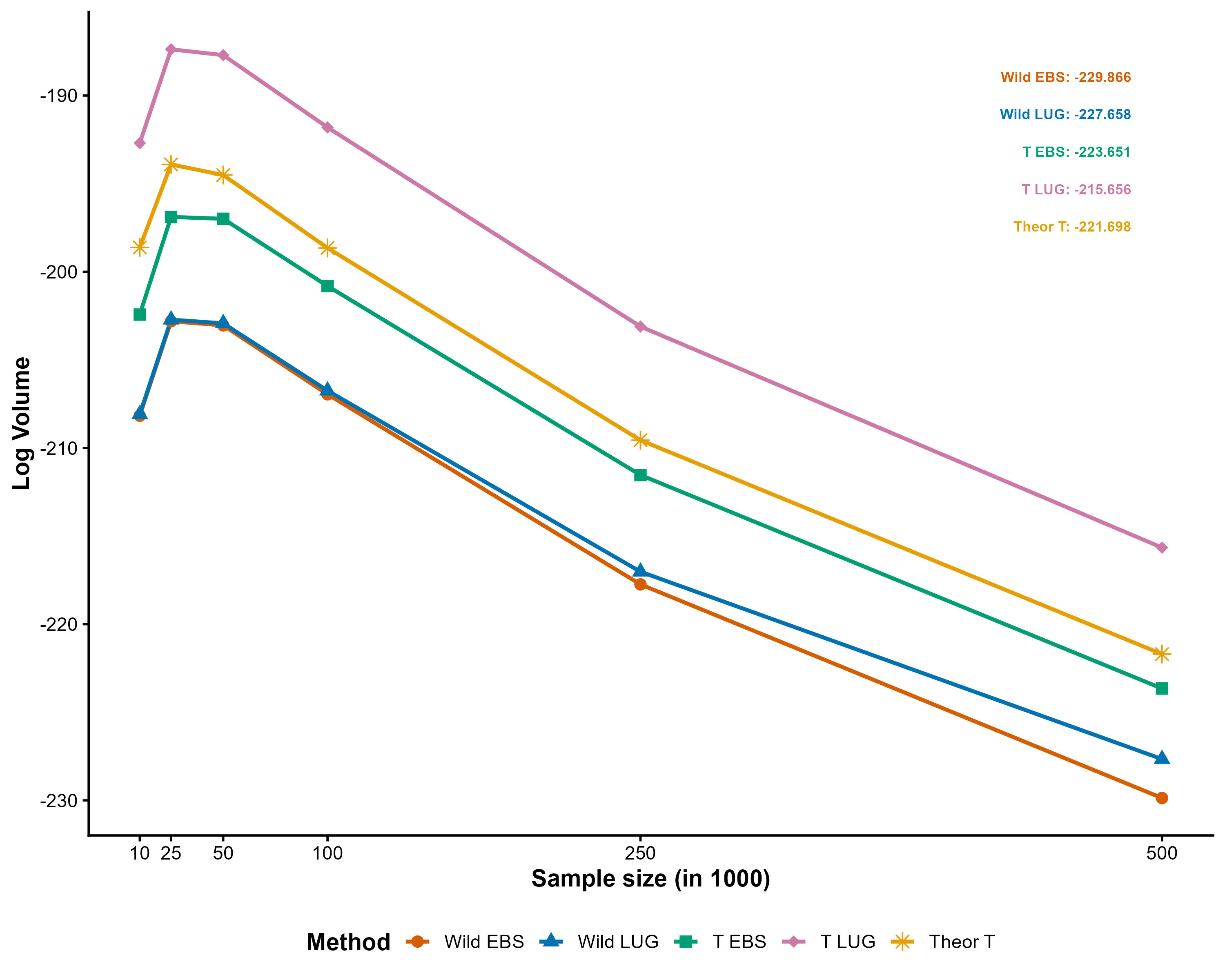}
		\caption{Volume (logarithm) of hyper-rectangle in Toeplitz case of $x$.}
		\label{vol_rect_Bonf_toep}
	\end{subfigure}
	
	\caption{{Comparing Bonferroni correction with the marginal friendly hyper-rectangles on the basis of coverage and volume contained (logarithm) based on different cases of variance-covariance matrix of the regressor in linear model. The dimension of parameters is 50 and number of batches is 25. }}
	\label{Bonf_comp}
\end{figure}

\subsection{Heavy-Tailed Errors}
Finally, we evaluate the robustness of the proposed framework under heavy-tailed error distributions, specifying a configuration of $d=50$ and $m=25$ with an equicorrelation covariance structure. Assuming standard Laplace distributed errors, the empirical coverage rates and hyper-rectangular volumes for the Least Absolute Deviations (LAD) estimates are illustrated in Figure \ref{fig:heavy_tailed_results}. For the $t$-copula procedure, integrating the Lugsail bias correction yields a substantial improvement in multivariate coverage over the standard EBS estimator (Figure \ref{fig:LAD_covg_dim50_t_err}), albeit at the expected cost of an increased confidence volume (Figure \ref{fig:LAD_vol_dim50}).

\begin{figure}[htbp!]
    \centering

    \begin{subfigure}[htbp!]{0.48\linewidth}
        \centering
        \includegraphics[width=\linewidth]{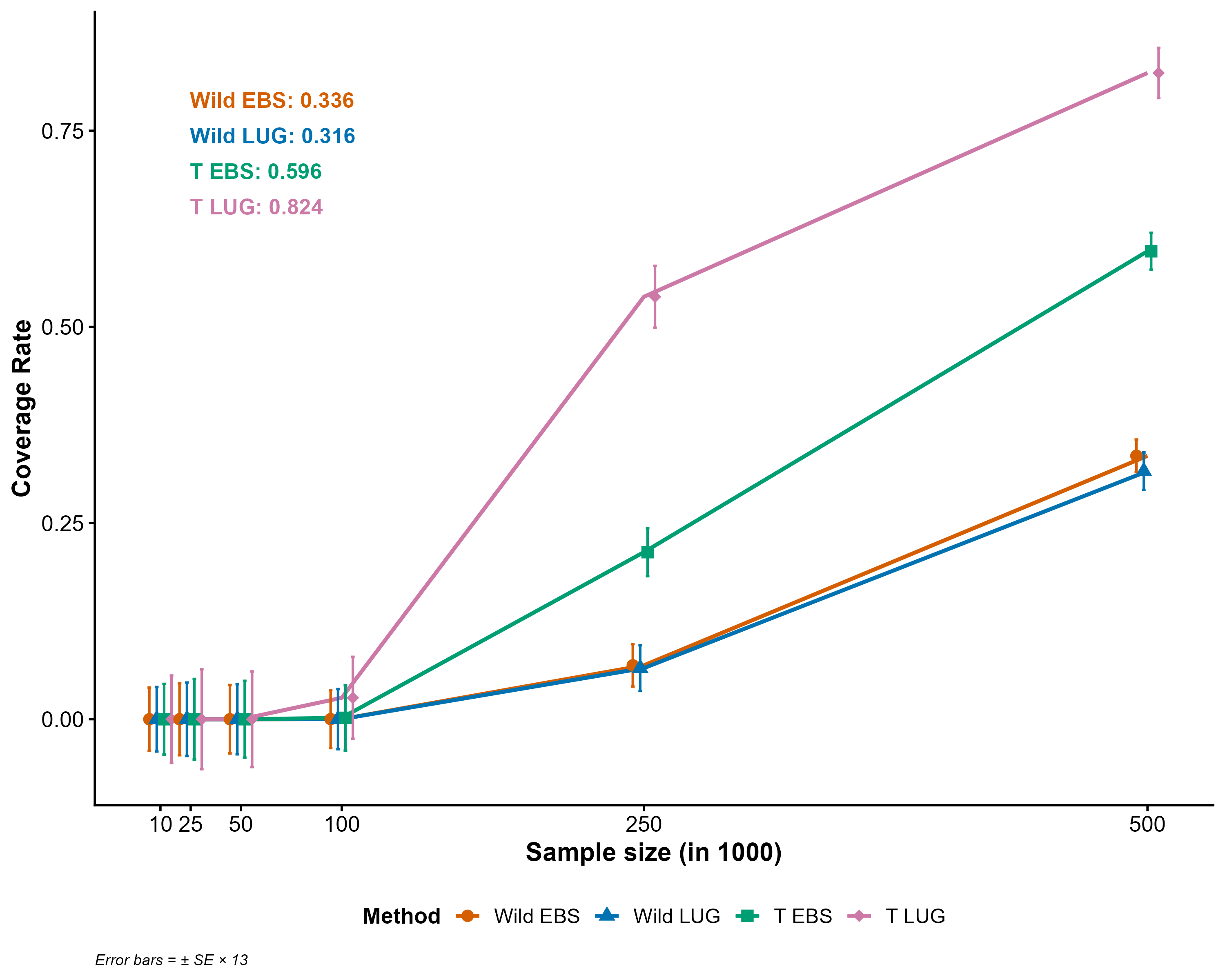}
        \caption{Multivariate coverage for LAD estimate.}
        \label{fig:LAD_covg_dim50_t_err}
    \end{subfigure}
    \hfill
    \begin{subfigure}[htbp!]{0.48\linewidth}
        \centering
        \includegraphics[width=\linewidth]{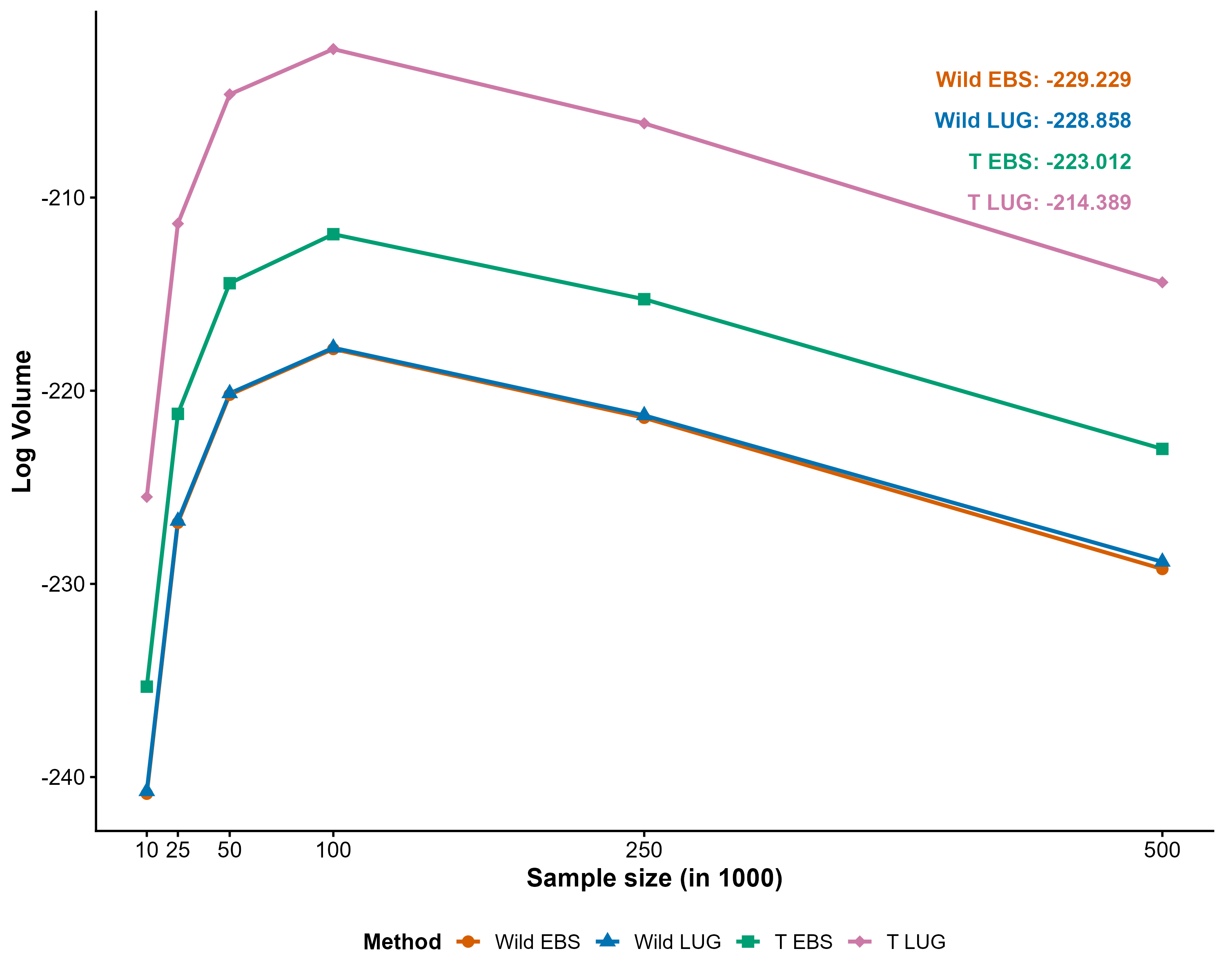}
        \caption{Volume of confidence hyper-rectangle for LAD estimate.}
        \label{fig:LAD_vol_dim50}
    \end{subfigure}
    
    \caption{Performance of the proposed estimators for  $d=50$ with $m=25$ and an equicorrelation matrix structure, evaluating Least Absolute Deviations (LAD) estimates under standard Laplace errors.}
    \label{fig:heavy_tailed_results}
\end{figure}

The simulation study demonstrates the empirical efficacy of the proposed framework, yielding three primary conclusions. First, the marginal formulation successfully bypasses the matrix inversion bottleneck of prior methods, delivering valid simultaneous confidence regions even in highly constrained regimes where the number of batches is less than or equal to the parameter dimension ($m \le d$). Second, the $t$-copula approximation consistently outperforms the Wild Bootstrap across all metrics by explicitly mapping the underlying empirical cross-covariance structure. Third, the choice between the standard EBS estimator and the Lugsail bias correction reveals a strict bias-variance tradeoff governed by the batch-to-dimension ratio. While the Lugsail correction effectively improves finite-sample coverage, particularly for heavy-tailed LAD models, its requirement to collapse adjacent batches significantly inflates the hyperrectangular volume. Consequently, the standard EBS estimator is optimal in low-batch settings ($m < d$) and large-sample asymptotes, whereas the Lugsail correction is advantageous only when batches are abundant (e.g., $m \ge 2d$). Ultimately, all proposed estimators correctly exhibit the theoretically expected asymptotic decay in volume and interval length as the sample size $n$ increases.

\section{Conclusion}
\label{sec:conclusion}
By integrating the EBS marginal-friendly projection and Lugsail bias-correction techniques of \citet{singh2025} with the cancellation method of \citet{zhu2021}, we derive a robust inference framework for SGD which  uplifts the limitation of number of batches being more than the dimension of the problem.  
The mathematical separation of the marginal scale factor completely removes the necessity to invert the full $d \times d$ covariance matrix, entirely resolving the $m > d$ degeneracy problem \citep{zhu2021}. The main contribution is to obtain marginal friendly quantiles based on two sampling procedures, wild bootstrap from distribution of the considered statistic and  iid samples from $t$-copula with a fitted correlation matrix. The second major contribution is to obtain finite sample improvement in marginal-friendly inference based on lugsail adjustment of the batch means estimator. 

The efficacy of the Lugsail bias correction is highly dependent on the available degrees of freedom. When $m < d$, the standard Equal Batch-Size (EBS) estimator dominates, whereas the Lugsail versions produce wider, slower-decaying intervals (with the Lugsail $t$-copula being the widest). This under-performance is fundamentally driven by Lugsail's requirement to collapse adjacent batches, halving an already small $m$. Conversely, when batches are abundant ($m > 2d$), the variance stabilizes, and the Lugsail correction successfully outperforms the standard EBS estimator. Moreover, both of the proposed algorithms require solving a 1-dimensional search problem in order to find the quantile of the distribution based on the bootstrap or iid samples. 

Several promising avenues remain for future research. First, to overcome the variance inflation associated with the Lugsail estimator in high dimensions, future work could integrate Overlapping Batch Means (OBM) or shrinkage-based regularization to stabilize the covariance estimation without sacrificing effective degrees of freedom. Second, while our theoretical guarantees rely on the strong convexity of the objective function, extending the marginal cancellation framework to non-convex optimization landscapes, such as deep neural networks, is a critical next step. Finally, adapting this simultaneous inference methodology to distributed or federated learning environments—where communication bottlenecks strictly limit the number of feasible batch computations—presents an important practical challenge for scalable statistical inference.

\section*{Acknowledgments}
 The work of Rahul Singh was supported by New Faculty Seed Grant No. MI03038G at Indian Institute of Technology Delhi, India.

\bigskip
\appendix
\section{Appendix}

\subsection{Proof of Theorem \ref{thm1}} \label{appendix1}
\begin{proof} 
The notation `$\Rightarrow$' denotes convergence in distribution. By applying the continuous mapping theorem to Theorem 2 in \citet{zhu2021}, as $n\to\infty$, we have
\begin{equation*}
\sqrt{n} \sqrt{m} (\hat{\theta}_{n,j} - \theta^*_j) \Rightarrow \sqrt{m} e_j^\top G B(1). 
\end{equation*}
where $G G^\top = \Sigma$, $e_j$ is the standard basis vector for the $j$-th dimension and $B(\cdot)$ is a $d$-dimensional standard Brownian motion. 
Let $v_j = G^\top e_j$, then $\|v_j\|^2 = e_j^\top G G^\top e_j = e_j^\top \Sigma e_j = \Sigma_{jj}$. For all $t \in [0,1]$, the process $v_j^\top B(t)$ is a linear combination of independent Brownian motions, forming a scalar Brownian motion with variance $\Sigma_{jj} t$, denoted by $W_j(t)$, such that $v_j^\top B(t) = \sqrt{\Sigma_{jj}} W_j(t)$. 
Evaluating this process at the endpoint $t=1$, the asymptotic limit of the scaled numerator becomes $\sqrt{m} \sqrt{\Sigma_{jj}} W_j(1)$.

Next, again using Theorem 2 in \cite{zhu2021}, as $n\to\infty$, we have
\begin{equation*}
n S_m(n) \Rightarrow \frac{1}{m-1} \sum_{i=1}^m G(m \Delta B_i - B(1))(m \Delta B_i - B(1))^\top G^\top 
\end{equation*}
where $\Delta B_i = B(i/m) - B((i-1)/m)$ is the increment of the $d$-dimensional Brownian motion over the $i$-th batch. The $j$-th diagonal element of $n S_m(n)$ is
\begin{equation*}
n [S_m(n)]_{jj} \Rightarrow \frac{1}{m-1} \sum_{i=1}^m \left(e_j^\top G (m \Delta B_i - B(1))\right)^2
\end{equation*}
By substituting $e_j^\top G = v_j^\top$ and defining $\Delta W_{j,i} = W_j(i/m) - W_j((i-1)/m)$, the expression simplifies to
\begin{equation*}
n [S_m(n)]_{jj} \Rightarrow \frac{1}{m-1} \sum_{i=1}^m \left(\sqrt{\Sigma_{jj}} (m \Delta W_{j,i} - W_j(1))\right)^2
\end{equation*}

Now, some straightforward algebra yields
\begin{equation}\label{pf:eq1}
t_{n,j} \Rightarrow \frac{\sqrt{m} W_j(1)}{\sqrt{\frac{1}{m-1} \sum_{i=1}^m \left( m \Delta W_{j,i} - W_j(1) \right)^2}}
\end{equation}

To determine the exact distribution of this ratio, we map the Brownian variables to standard normal random variables. By the properties of standard Brownian motion, the non-overlapping increments are independent and distributed as $\Delta W_{j,i} \sim \mathcal{N}(0, 1/m)$. 
We define a set of iid standard normal random variables $Z_i = \sqrt{m} \Delta W_{j,i}$, such that $Z_i \sim \mathcal{N}(0, 1)$. Notice that,
\begin{equation*}
W_j(1) = \sum_{i=1}^m \Delta W_{j,i} = \sum_{i=1}^m \frac{Z_i}{\sqrt{m}} = \sqrt{m} \left( \frac{1}{m} \sum_{i=1}^m Z_i \right) = \sqrt{m} \bar{Z}
\end{equation*}
where $\bar{Z}= \sum_{i=1}^m Z_i/m$. Furthermore, $m \Delta W_{j,i} - W_j(1) = m \left( \frac{Z_i}{\sqrt{m}} \right) - \sqrt{m} \bar{Z} = \sqrt{m}(Z_i - \bar{Z})$, implying that
\begin{equation*}
{\sqrt{\frac{1}{m-1} \sum_{i=1}^m \left( m \Delta W_{j,i} - W_j(1) \right)^2}}= 
\sqrt{m} \sqrt{\frac{\sum_{i=1}^m (Z_i - \bar{Z})^2}{m-1}} = \sqrt{m} S_Z
\end{equation*}
where $S_Z$ is the sample standard deviation of $Z_i$'s. Consequently, \eqref{pf:eq1} simplifies to
\begin{equation*}
t_{n,j} \Rightarrow \frac{m \bar{Z}}{\sqrt{m} S_Z} = \frac{\sqrt{m} \bar{Z}}{S_Z}
\end{equation*}
Therefore, by definition of the Student t-distribution, for any $m\ge 2$, $t_{n,j}$ converges in distribution to Student's t-distribution with $m-1$ degrees of freedom. 
\end{proof}

\end{document}